\newtheorem{theorem}{Theorem}[section]
\newtheorem{lem}[theorem]{Lemma}
\newtheorem{prop}[theorem]{Proposition}
\theoremstyle{definition}
\newtheorem{definition}{Definition}[section]
\newtheorem{defn}{Definition}[section]
\journal{Information Systems}
\begin{document}

\begin{frontmatter}



\title{Differentially Private Release of Event Logs for Process Mining}


\author[inst1]{Gamal Elkoumy\corref{cor1}}
\ead{gamal.elkoumy@ut.ee}
\affiliation[inst1]{organization={University of Tartu},
            city={Tartu},
            country={Estonia}}

\cortext[cor1]{Corresponding author.}
\author[inst2]{Alisa Pankova}
\ead{alisa.pankova@cyber.ee}

\author[inst1]{Marlon Dumas}
\ead{marlon.dumas@ut.ee}

\affiliation[inst2]{organization={Cybernetica},
            city={Tartu},
            country={Estonia}}

\begin{abstract}
The applicability of process mining techniques hinges on the availability of event logs capturing the execution of a business process. 
In some use cases, particularly those involving customer-facing processes, these event logs may contain private information.
Data protection regulations restrict the use of such event logs for analysis purposes. 
One way of circumventing these restrictions is to anonymize the event log to the extent that no individual can be singled out using the anonymized log.
This article addresses the problem of anonymizing an event log in order to guarantee that, upon release of the anonymized log, the probability that an attacker may single out any individual represented in the original log does not increase by more than a threshold.
The article proposes a differentially private release mechanism, which samples the cases in the log and adds noise to the timestamps to the extent required to achieve the above privacy guarantee.
The article reports on an empirical comparison of the proposed approach against the state-of-the-art approaches using 14 real-life event logs in terms of data utility loss and computational efficiency.
\end{abstract}

\begin{keyword}

Privacy-Preserving Process Mining \sep Process Mining \sep Privacy-Enhancing Technologies \sep Differential Privacy
\end{keyword}

\end{frontmatter}


\section{Introduction}
\label{sec:intro}

Process mining is a family of techniques that helps organizations enhance their business processes' performance, conformance, and quality. The input of process mining techniques is an event log. An event log captures the execution of a set of instances of a process (herein called \emph{cases}). An event log consists of event records. Each record contains a reference to a case identifier, a reference to an activity, and at least one timestamp. 
Table~\ref{tbl:event_log}
shows an example of a log. 
Each case ID refers to a person (e.g., a patient in a hospital). Each event corresponds to an activity performed for that person. 
For example, in a healthcare process, each activity may correspond to a treatment that the patient in question underwent.


\begin{table}[hbtp]

\label{tbl:event_log}
\centering
\caption{Example of an event log}

\scriptsize
	\begin{tabular}[t]{|c|c|c|c|}
	
\hline

Case ID	&	Activity	&	Timestamp	&	Other Attributes	\\ \hline
\multirow{3}{*}{1}	&	A	&	8/8/2020 10:20	&	….....	\\
	&	B	&	8/8/2020 10:50	&	….....	\\
	&	C	&	8/8/2020 16:15	&	….....	\\ \hline
\multirow{4}{*}{2}	&	D	&	8/8/2020 12:37	&	….....	\\ 
	&	A	&	8/8/2020 14:37	&	….....	\\
	&	E	&	8/8/2020 15:07	&	….....	\\
	&	C	&	8/8/2020 20:31	&	….....	\\\hline
\multirow{3}{*}{3}	&	A	&	8/9/2020 13:30	&	….....	\\
	&	B	&	8/9/2020 13:55	&	….....	\\
	&	C	&	8/9/2020 20:55	&	….....	\\\hline
\multirow{4}{*}{4}	&	D	&	8/9/2020 15:00	&	….....	\\
	&	A	&	8/9/2020 17:00	&	….....	\\
	&	B	&	8/9/2020 17:40	&	….....	\\
	&	C	&	8/9/2020 23:05	&	….....	\\ \hline
\multirow{3}{*}{5}	&	A	&	8/9/2020 17:25	&	….....	\\
	&	E	&	8/9/2020 17:55	&	….....	\\
	&	C	&	8/10/2020 23:55	&	….....	\\ \hline
\multirow{3}{*}{6}	&	A	&	8/11/2020 17:00	&	….....	\\
	&	B	&	8/11/2020 17:27	&	….....	\\
	&	C	&	8/11/2020 23:45	&	….....	\\ \hline

	\end{tabular}

\end{table}

Often, an event log contains private information about individuals. 
Data regulations, e.g., the General Data Protection Regulation (GDPR)\footnote{\url{http://data.europa.eu/eli/reg/2016/679/oj}}, restrict the use of such logs. 
One way to overcome these restrictions is to anonymize the event log such that no individual can be singled out. 
 Singling out an individual happens when they 
 can be distinguished by evaluating a predicate that discriminates them within a group. The legal notion of singling out has been mathematically formalized by Cohen \& Nissim~\cite{cohen2020towards}, who define a type of attack called \emph{Predicate Singling Out (PSO) attack}.
The release of the event log in Table~\ref{tbl:event_log}
permits singling out individuals. Specifically, the predicate ``undergoing activity E after activity D and A'' and the time difference between these activities can lead to a linkage attack~\cite{rafiei2021group}. Given that prior knowledge, the adversary can single out the patient with case ID 2. 

Privacy-Enhancing Technologies (PETs), such as k-anonymity and differential privacy~\cite{dwork2014algorithmic}, 
protect the data release, including event logs.
Among 
existing PETs, differential privacy (DP) stands out because it 
mitigates PSO attacks~\cite{cohen2020towards} and due to its composable privacy guarantees~\cite{dwork2014algorithmic}. 

DP mechanisms 
inject noise into the data, quantified by a parameter called $\epsilon$. 
Lee et al.~\cite{lee2011much} show that ``the proper value of $\epsilon$ varies depending on individual values'' and that the presence of ``outliers also changes the appropriate value of $\epsilon$.'' Dwork et al.~\cite{dwork2019differential} state that ``we do not know what parameter $\epsilon$ is right for any given differentially
private analysis, and we do know that the answer can vary tremendously 
based on attributes of the dataset.'' Accordingly, this article proposes a method to 
determine the $\epsilon$ value required to anonymize an event log based on a business-level metric, namely \emph{guessing advantage}. The guessing advantage is the 
increase in the probability that an adversary may guess information 
about an individual after the event log's release. 

Usually, an adversary (e.g., an analyst) has prior knowledge about individuals in the log before its release. For example, the attacker may know that the person went to the hospital at 10 am Sunday morning and they received a rare treatment. Using this knowledge, the adversary has a certain probability of guessing  information about an individual. After the release, the adversary gains an additional advantage (knowledge) to guess information. Anonymization limits this risk. 
This article investigates limiting the additional risk by a maximum guessing advantage level $\delta$.
Specifically, the following problem is addressed:

\begin{quote} \it
Given an event log L, and given a maximum level of acceptable guessing advantage $\delta$, generate an anonymized event log L$'$ such that the success probability of singling out an individual after publishing L$'$ does not increase by more than $\delta$.
\end{quote}

Naturally, we should ensure that the anonymized log is useful for process mining. 
In this respect, a desirable property, at least in some use cases, is that the anonymized log should not introduce new \emph{case variants} relative to the original one. A case variant is a distinct sequence of activities. For example, the case variants of the log in Table~\ref{tbl:event_log} are $\{ \langle A, B, C \rangle , \langle D,A, E, C \rangle , $ $  \langle D, A, B, C \rangle ,   \langle A, E, C \rangle \}$. 
There are at least two reasons why this property may be desirable. First, this property ensures that the anonymization does not introduce spurious directly-follows relations between activities. The set of directly-follows relations between activities, known as the Directly-Follows Graph (DFG), is a common artifact produced by process mining tools to help their users to understand the structure of a process~\cite{LeemansPW19,Chapela-CampaDM22}. The DFG is also used as input by many automated process discovery techniques~\cite{augustoCDRMMMS19}. Spurious directly-follows relations may lead users into wrong conclusions, for example, a spurious relation in a DFG may lead the user to conclude that activity ``Pay invoice'' is sometimes followed by ``Receive invoice'', even when this never happens in the original log.
Second, this property ensures that a conformance checking algorithm~\cite{Carmona22} applied to the anonymized log does not return false positives, i.e., that it does not report deviations that do not exist in the original log. Indeed, every case variant added to an event log during anonymization is a potential false positive in the conformance checking output. Notwithstanding the above, we acknowledge that there are use cases, e.g.\ in the field of process performance mining~\cite{kabierski2023hiding}, where this property might not be required.


A second desirable property is that the differences between the timestamps of consecutive events in the anonymized log are as close as possible to those in the original log, as these time differences are used by performance mining techniques~\cite{DumasRMR18}.
Accordingly, we tackle the above problem subject to the requirements:

%
\begin{enumerate}[label=\textbf{R\arabic*}]

	\item\label{int:req:trace}
	The anonymized event log must not introduce new case variants to the original log. 
	
	
	 
	\item\label{int:req:time}
	The difference between the real and the anonymized time values is minimal given the risk metric $\delta$. 

\end{enumerate}
The second requirement can be tackled w.r.t. different attack models. This article considers an attack model wherein the attacker seeks to single out an individual, represented by a trace in the log, based on a prefix, a suffix, or an event timestamp of the individual's trace in the released log.


We tackle the above problem by proposing a notion of \emph{differentially private event log}. Given a maximum allowed guessing advantage, $\delta$, a differentially private event log is obtained by sampling the traces in the log and injecting noise to the event timestamps. After release of a differentially private event log, the probability that an attacker may single out any individual, based on the prefixes, suffixes, or event timestamps of the individual's trace in the released log, is not more than $\delta$.





This article is an extended version of a conference paper~\cite{elkoumy2021mine}. Relative to the conference version, the main extension is a revised anonymization method, which achieves lower utility loss for a given guessing advantage $\delta$ by: (i) applying both over- and undersampling, as opposed to only oversampling; and (ii) filtering out high-risk cases. The article also extends the evaluation to assess utility loss 
w.r.t.\ the impact of anonymization on the process maps discovered from the 
log.


The article is structured as follows. Sect.~\ref{sec:background} introduces background notions and related work. 
Sect.~\ref{sec:approach} presents the proposed approach. 
Sect.~\ref{sec:eval} presents an empirical evaluation. Finally, Sect.~\ref{sec:conclusion} concludes and discusses future work.

\section{Background and Related Work}
\label{sec:background}
In this section, we introduce bounded and unbounded DP, and we formalize their definitions. We then overview the recently developed privacy-preserving process mining techniques.

\subsection{Differential Privacy (DP)}

\label{sec:diffPriv}



We consider the problem of providing differential privacy guarantees on the release of an event log.
As illustrated in Table~\ref{tbl:event_log}, an event log is a set of events capturing the execution of activities of a process. Each event contains a unique identifier of the process instance in which it occurs (the case ID), an activity label, and a timestamp. 
An event may contain other attributes, e.g., resources. 
This article focuses on anonymizing three attributes: case ID, activity label, and timestamp.
If we group the events in a log by case ID and sort the events chronologically, every resulting group is called a trace. A trace captures the sequence of events that occurred in a case that  corresponds to an individual (e.g.,  a customer) who requires their privacy to be maintained. If an attacker can single out a trace, they can single out the corresponding individual.
\begin{defn} [Event Log, Event, Trace]\label{def:event_log}
	An event log $L= \{e_1, e_2, ..., e_n\}$ of a process is a set of events $e=(i,a,ts)$, each capturing an execution of an activity $a$ (an activity instance), with a timestamp $ts$, as part of a case $i$ of the process. The trace $t=\langle e_1, e_2, ..., e_m\rangle$ of a case $i$ is the sequence of events in $L$ with identifier $i$, ordered by timestamp. An event log $L$ may be represented as a set of traces $\{ t_1, t_2, ..., t_k \}$. 
\end{defn}


This paper proposes a differentially-private mechanism to anonymize the event log.
A privacy mechanism $M: L \rightarrow Range (M)$ maps an event log $L$ to a particular distribution of values $Range(M)$ (e.g., to a vector of real numbers). A mechanism $M$ can be either unbounded or bounded $\epsilon$-differentially private ($\epsilon$-DP). An unbounded $\epsilon$-DP mechanism makes it hard to distinguish two event logs that differ in the \emph{presence} of one trace~\cite{dwork2006}. A bounded $\epsilon$-DP mechanism makes it hard to distinguish two event logs that differ in the \emph{attribute values} of one trace.

\begin{defn} [Unbounded $\epsilon$-differentially private mechanism~\cite{dwork2006}]\label{def:udp}
A mechanism $M$ is said to be $\epsilon$-differentially private if, for all the event logs $L_1$ and $L_2$ differing \underline{at most on one trace}, and all $S \subseteq Range (M)$, we have $Pr[M(L_1) \in S] \leq exp(\epsilon) \times Pr[M(L_2) \in S]$.
\end{defn}

\begin{defn} [Bounded $\epsilon$-differentially private mechanism~\cite{dwork2006calibrating}]\label{def:bdp}
A mechanism $M$ is $\epsilon$-differentially private if, for all the event logs $L_1$ and $L_2$ differing \underline{at most on the attribute values of one trace}, and all $S \subseteq Range (M)$, we have $Pr[M(L_1) \in S] \leq exp(\epsilon) \times Pr[M(L_2) \in S]$.
\end{defn}

In some cases, it is desired to apply DP to only values of a particular attribute $A$, e.g., the attribute timestamp in Table~\ref{tbl:event_log}.
We apply DP to $L_1$ and $L_2$ w.r.t the timestamp attribute $TS$, i.e., $L_1$ and $L_2$ differ only on $TS$'s value in a single trace.
Moreover, we want to take into account the particular \emph{amount of change} in the attribute $TS$.


\begin{defn} [Bounded $\epsilon$-differentially private mechanism w.r.t the timestamp attribute]\label{def:bdp_attribute}
A mechanism $M$ is $\epsilon$-differentially private w.r.t the timestamp attribute $TS$ iff for every pair of event logs $L_1$ and $L_2$ differing along attribute $TS$ in at most one trace, and for all $S \subseteq Range (M)$, we have 

$Pr[M(L_1) \in S] \leq exp(\epsilon \cdot |L_1.TS - L_2.TS|) \times Pr[M(L_2) \in S]$.

\end{defn}

The $\epsilon$-differential privacy restricts the ability to single out an individual (Def.~\ref{def:udp} and \ref{def:bdp}) or disclose an individual's private attribute (Def.~\ref{def:bdp_attribute}).In an interactive mechanism~\cite{dwork2014algorithmic}, a user submits a query function $f$ to an event log and receives a noisified result. Formally, there is a mechanism  $M_f$ that computes $f$ and injects noise into the result. The amount of 
noise depends on the \textit{sensitivity} of $f$, which quantifies how much change in the input of $f$ affects change in its output.

\begin{definition} [Global Sensitivity] \label{def:gs2}
Let $f : L \rightarrow \mathbb{R}^d$.
\begin{itemize}
\item Global sensitivity w.r.t. presence of a trace is $\Delta f= \max\limits_{L_1,L_2} |f(L_1) - f(L_2)|$;
\item Global sensitivity w.r.t. the timestamp attribute $TS$ is $\Delta^A f= \max\limits_{L_1,L_2} \frac{|f(L_1) - f(L_2)|}{|L_1.TS - L_2.TS|}$;
\end{itemize}
where $\max$ is computed over all event logs $L_1,L_2 $ differing in one trace at most.
\end{definition}

Given the event log $L$ and the query function $f$, a randomized mechanism $M_f$ returns a noisified output $f(L)+Y$, where $Y$ is a noise value drawn randomly from a particular distribution. E.g., we can draw values from a Laplace distribution $Lap(\lambda, \mu)$, which has a probability density function $\frac{1}{2\lambda} exp(-\frac{|x-\mu|}{\lambda})$, where $\lambda$ is a scale factor, and $\mu$ is the mean. It is known~\cite{dwork2014algorithmic} that, for real-valued $f$, if we set $\mu=0$, for $\lambda = \frac{\Delta f}{\epsilon}$ we obtain an $\epsilon$-DP mechanism w.r.t. a trace presence (Def.~\ref{def:udp}), and for $\lambda = \frac{\Delta^A f}{\epsilon}$, we obtain an $\epsilon$-DP mechanism w.r.t. attribute $A$ (Def.~\ref{def:bdp_attribute}).

 The privacy parameter 
 $\epsilon$ ranges from $0$ to $\infty$, and the desired level of privacy 
 depends on the data distribution. 
 Lee et al.~\cite{lee2011much} demonstrate the challenges with choosing $\epsilon$ to protect individual information with a fixed probability. Although  $\epsilon$ is used to quantify the risk of releasing a statistical analysis of sensitive data, it is not an absolute privacy metric but rather a relative value. 
 Hsu et al. \cite{hsu2014differential} present an economical method for choosing $\epsilon$. 
 They consider two conflicting goals: learning the correct analysis from the data and keeping the data of individuals private. 
 They use a privacy budget for individuals to balance the conflicting objectives.
 
 Laud et al.~\cite{laud2020framework} propose a framework to quantify $\epsilon$ from a probability score called the \textit{guessing advantage} -- the increase in the probability that an adversary may guess information about an individual after data release. 
 They state the attacker's goal as a Boolean expression of guessing attributes, and they studied the change of prior and posterior probabilities of guessing. 
 This article  adopts the work in~\cite{laud2020framework} to provide an estimation 
  of $\epsilon$ based on the guessing advantage threshold, with the assumption of the worst case scenario in which an adversary has background knowledge about all the other instances.

Li et al.~\cite{li2012sampling} prove that adding random sampling to a differentially private mechanism amplifies the level of privacy protection. They explain that adding random sampling with a probability $\beta$ reduces the $e^\epsilon$ by a factor of $\beta$. This privacy amplification is valid with ($\epsilon$, $\delta$)-differential privacy under sampling in the setting where one publishes an anonymized version of a dataset~\cite{li2012sampling}. In this article, we perform a sampling of cases based on prefixes/suffixes groups. We sample an entire case in order to fulfill the requirement~\ref{int:req:trace}. Also, we prune cases that are too sensitive and violate the guessing advantage threshold $\delta$. This case pruning reduces the amount of noise, as we will explain later.

\subsection{Privacy-Preserving Process Mining (PPPM)}

\label{sec:pppm}
The use of PETs for PPPM has been considered in previous studies. Elkoumy et al.~\cite{elkoumy2021privacy} 
present three families of privacy models to achieve PPPM: group-based models, DP models, and encryption-based models.
They then analyze the privacy requirements that GDPR brings to PPPM. This article addresses two of these requirements: anonymity and unlinkability~\cite{elkoumy2021privacy}.

Group-based models have been addressed in several studies~\cite{rafiei2020tlkc,rafiei2021group}. TLKC~\cite{rafiei2020tlkc} proposes a k-anonymity mechanism to anonymize event logs. TLKC anonymizes 
a log from the cases' perspective. This work has been extended~\cite{rafiei2021group} to anonymize the log from the resource perspective. These techniques adopt k-anonymity, which results in suppressing cases or events within cases.
In one example in~\cite{rafiei2020tlkc}, TLKC suppressed 87\% of the activities in the output.
Besides, k-anonymity does not fully mitigate PSO attacks~\cite{cohen2020towards}.

Other studies adopt differential privacy. Mannhardt et al.~\cite{mannhardt2019privacy}  propose an event log summarization approach that uses differential privacy to anonymize two types of queries: the query ``frequencies of directly-follows relations'' and ``frequencies of trace variants.'' This approach does not consider logs with timestamps. While summarization approaches are beneficial for some process mining techniques, e.g., constructing the directly-follows graph and trace variant analysis, other process mining techniques such as conformance checking, business process simulation, and performance mining require the release of an anonymized event log where the events annotated with their execution timestamps.


PRIPEL~\cite{fahrenkrog2020pripel} uses timestamp shifts to anonymize the timestamp attribute of the log. It ensures privacy guarantees based on individual cases. Also, PRIPEL uses sequence enrichment to anonymize other attributes of the log. PRIPEL takes three input parameters, namely $\epsilon$, $k$, and $N$. $\epsilon$ is the DP parameter, $k$ is the cut-out frequency (i.e., PRIPEL cuts out all variants that appear less than $k$), and $N$ is the maximal prefix length.
This study does not address the problem stated in Sect.~\ref{sec:intro} because they do not limit the guessing advantage to a certain threshold. Instead, the user has to provide an $\epsilon$ value as an input. PRIPEL~\cite{fahrenkrog2020pripel} uses the same $\epsilon$ value to anonymize the trace variants and timestamps, although $\epsilon$ has a different interpretation for different value ranges. 

SaCoFa~\cite{fahrenkrog2021sacofa} uses semantic constraints to achieve high-utility anonymization of event logs using DP. This approach anonymizes the log by replacing prefixes shared by multiple case variants with perturbed versions thereof. The approach replaces an original prefix with a perturbed one, provided that the perturbed prefix fulfills certain semantic constraints and is within a certain distance of the original one. SaCoFa focuses on anonymizing the control flow query  and does not address the problem stated in Sect.~\ref{sec:intro} insofar as it does not anonymize the timestamps.




Other studies anonymize the event log from the resource perspective.  Batista \& Solanas~\cite{batista2021uniformization} present an approach that 
groups individuals based on activity distribution and transfers resource information within the group to uniform the resource distribution. PRETSA~\cite{fahrenkrog2019pretsa} provides event log sanitization by adopting k-anonymity and t-closeness to avoid disclosing employees' identities. These studies do not address the problem stated in Sect.~\ref{sec:intro} because they study log anonymization to protect resource disclosure.


Other related work of 
PPPM is orthogonal to our research, which does not provide a concrete method to anonymize the event log. 
Kabierski et al.~\cite{kabierski2021privacy,kabierski2023hiding} provide a framework for anonymizing process performance indicators.
Rafiei et al.~\cite{rafiei2020towards} present quantification metrics of both utility and disclosure risk. 
Furthermore, other studies addressed the 
secure processing of distributed event logs~\cite{elkoumy2020secure,elkoumy2020shareprom}. 
Other studies addressed the privacy-preserving continuous event data publishing~\cite{rafiei2021privacy,Rafiei22Quantifying}.


\section{Approach}
\label{sec:approach}


We seek to anonymize an event log in such a way that an attacker cannot single out an individual based on a prefix or suffix of the individual's trace or based on the event timestamps. Accordingly, the proposed approach relies on a data structure that captures all prefixes and suffixes of a set of traces, namely a Deterministic Acyclic Finite State Automata (DAFSA)~\cite{reissner2017scalable}. 
By analyzing the frequency and time differences of each DAFSA transition, we estimate the amount of noise required to achieve the required guessing advantage.

Concretely, given a log and a guessing advantage threshold $\delta$, our approach produces a differentially private event log in 7 steps as outlined in Fig.~\ref{fig:approach}. 
\begin{figure}[hbtp]
\centering
\includegraphics[width=1\columnwidth]{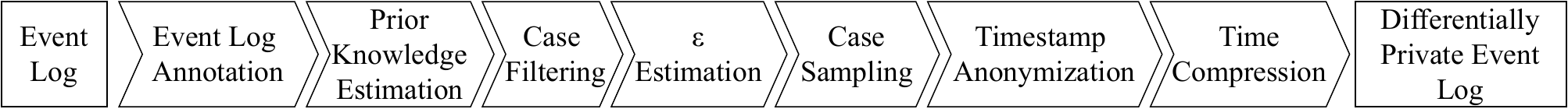}

\caption{ Approach }
	\label{fig:approach}
   
\end{figure}
The first step constructs a lossless intermediate representation of a log (DAFSA). The DAFSA groups the traces that share the same prefixes or suffixes. An attacker may have prior knowledge about the events recorded in the log before the log publishing. Consequently, we estimate the prior knowledge of every event that belongs to a prefix/suffix group. Some events may correlate with high prior-knowledge values, leading to more noise injected in the log to achieve the given $\delta$ threshold. Accordingly, we provide a case filtering mechanism that filters out entire cases based on the estimated prior knowledge of their events. The fourth step  estimates an $\epsilon$ value for every prefix/suffix group for the input $\delta$ threshold. The fifth step uses the estimated $\epsilon$ value to apply sampling to the cases based on their prefixes/suffixes and case variants. The sixth step applies timestamp anonymization based on the estimated $\epsilon$ values for every group of prefixes/suffixes. Lastly, we post-process the differentially private log to compress the timestamp values so that the overall timeframe of the resulting log matches closely with the original log. We generate new case IDs, so an attacker cannot use the case ID (on its own) to identify an individual.
The rest of this section introduces the attack model and then discusses each step of the proposed approach in turn.


\subsection{Attack Model}
\label{sec:attack_model}
We consider a scenario where a data owner shares an event log with an analyst.
We assume that the activity labels and the smallest and largest timestamp in the event log are public information. We assume that each trace in the log pertains to an individual whose privacy we wish to safeguard. We view the analyst as a potential attacker who may seek to infer information about an individual based on the released log. We seek to protect the release under a worst-case scenario where the analyst has background knowledge about all individuals in the log except for the individual of their interest.
Specifically, we seek to protect the release of log L to prevent the attacker from fulfilling one or both of the two attack goals:
\begin{itemize}
    \item $h_1$: Has the case of an individual gone through a given prefix or suffix? The output is a bit with a value $\in \{0,1\}$.
    \item $h_2$: What is the cycle time of a particular activity that has been executed for the individual? The output is a real value that the attacker may wish to estimate with a certain precision.
\end{itemize}


Note that we do not seek to prevent the attacker from guessing the activity labels, i.e.,\ we do not view the activity labels as private information. Also, we assume that cases are independent, meaning that the sequence of activities that a case follows does not depend on the activity sequences of other cases. This assumption usually holds in a business process, e.g.,\ the patient's pathway in a treatment process does not depend on that of other patients. If an individual participates in more than one case (i.e., cases are not independent), the $\epsilon$ parameter should be divided by the maximum number of cases related to an individual to estimate the decreased privacy guarantees. 
Also, this paper proposes a one-shot event log release mechanism where the
log owner anonymizes the log once and releases the anonymized log as the single access to their business process execution. In case of repeating the anonymization more than once, the $\epsilon$ parameter shall be divided by the number of repeated anonymizations.

To prevent an attacker from achieving goals $h_1$ and $h_2$, we introduce a notion of a differentially private event log. 

\begin{defn}[Differentially-Private Event Log]
\label{def:diff_priv_EL}
Let $L$ be an event log as defined in Def.~\ref{def:event_log}. We say that a log $L' = M(L)$ is $\epsilon$-differentially private if: (1) it ensures $\epsilon$-differential privacy guarantee from the control-flow perspective; (2) it is $\epsilon$-differentially private w.r.t. timestamp.
\end{defn}

\subsection{Event Log State-Annotation}

\label{sec:app:prep}
\label{sec:event_log_rep}



Our goal is to prevent an attacker from singling out individuals based on any prefix or suffix of their trace (cf.\ attack goal $h_1$). To this end, we group the prefixes and suffixes in the log and inject independent differentially private noise to each group. Consequently, we need a log representation that partitions the prefixes and suffixes of the log traces into groups. In other words, this representation should assign each prefix (suffix) in the log to one group such that the union of the groups is equal to the entire set of prefixes (suffixes). Also, we require that this representation preserves only the set of case variants of the log (cf.~\ref{int:req:trace}).

The DAFSA provides us with such a partitioning. Given a set of words, each state in the DAFSA corresponds to a group of prefixes that share the same set of possible suffixes and suffixes that share the same set of prefixes~\cite{daciuk2000incremental}. 
An advantage of the DAFSA (specifically the \emph{minimal DAFSA}) over similar representations, such as prefix trees, is that a (minimal) DAFSA contains a minimal number of groups (states)~\cite{daciuk2000incremental}. By minimizing the number of groups, we obtain larger groups. The larger the group, the smaller the needed noise injection to achieve $\epsilon$-DP.


\begin{defn}[Minimal DAFSA of a set of words~\cite{daciuk2000incremental}]\label{def:dafsa}
Let $V$ be a finite set of labels. A DAFSA is an acyclic directed graph $D = ( S, s_0, A, S_f)$, where $S$ is a finite set of states, $s_0 \in S$ is the initial state, $A \subset S \times V \times S$ is a set of labeled transitions, and $S_f$ is a set of final states. A DAFSA of a set of words $W$ is a DAFSA such that every word in $W$ is a path from an initial to a final state, and, conversely, every path from an initial state to a final state is a word in $W$. A minimal DAFSA of a set of words $W$ is a DAFSA of $W$ with a minimal number of states. 
\end{defn}

Given a path from the initial state $s_0$ to a state $s\in S$ in a DAFSA, the sequence of labels of the arcs in the path is the \textit{prefix} of $s$. Similarly, given a path from $s$ to a final state $s_f \in S_f$, the sequence of labels 
in such  path is the \textit{suffix} of $s$. 
 

Reissner et al.~\cite{reissner2017scalable} reuse the algorithm in~\cite{daciuk2000incremental} to represent an event log as a DAFSA. Every trace in the log is seen as a word over the alphabet of activity labels. Each particular word extracted from an event log in this way is called a \emph{case variant} of the log.

\begin{defn}[Case Variant] 
Given an event log $L$, a case variant of $L$ is a sequence of activity labels $\langle\, a_1, a_2, ..., a_k \,\rangle$ such that there is a trace $\langle\, e_1, e_2, ..., e_k \,\rangle$ of $L$ such that $\forall j \in [1..k] \; id(e_j) = a_j$, where $id(e)$ is the case ID of the event $e$.
\end{defn}

The set of case variants (each one represented as a word) is then compressed into a DAFSA. 
For example, the DAFSA representation of the log in Table~\ref{tbl:event_log} is shown in Fig.~\ref{fig:dafsa}. 




\begin{figure}[htbp]
  
  \centering
\includegraphics[width=.7\columnwidth]{{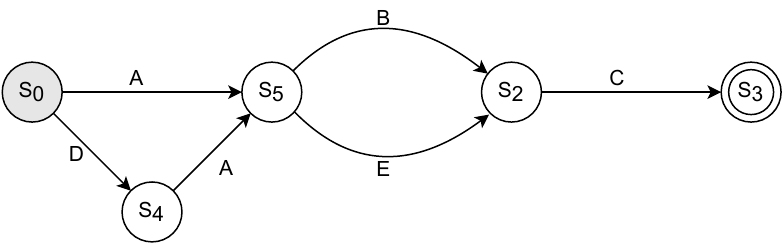}}

	\caption{ DAFSA of the event log in Table~\ref{tbl:event_log}}
	\label{fig:dafsa}
  
\end{figure}

 \begin{defn}[Common prefixes and suffixes~\cite{reissner2017scalable}] \label{def:common}
 Let $D = ( S, s_0, A, S_f)$ be a DAFSA. The set of common prefixes of $D$ is $\mathbb{P} =\{pref(s) | s\in S \wedge |s\blacktriangleright|>1\}$. The set of common suffixes of $D$ is $\mathbb{S}= \{suff(s) | s\in S \wedge |\blacktriangleright s|>1\}$.
 \end{defn}

The common prefixes of the DAFSA in Fig.~\ref{fig:dafsa} are $\{\langle A,B\rangle\, \langle D,A\rangle\, \langle A\rangle \}$, and the common suffixes are $\{\langle B,C\rangle\, \langle E,C\rangle \}$.
Cases corresponding to case variants that traverse a given DAFSA state $s$ share the same 
set of prefixes and suffixes. This article employs DAFSA states and transitions 
to group common 
prefixes and suffixes within cases of the log. We annotate the log in such a way that each event (activity instance) is related to a DAFSA transition (i.e.,\ each event is annotated with a pair of states).

\begin{defn} [DAFSA-Annotated Event Log]
\label{def:state_el}
A DAFSA-annotated event log $L_s =\{r_1, r_2, ..., r_n\}$ is a set of entries $r=(i,a,ts, s_i, s_e)$, each links an event $e=(i,a,ts) \in L$, where $L$ is the event log, to the DAFSA transition $t=(s_i, a, s_e)$ that represents the occurrence of that event, where $t$ starts from $s_i$, ends at $s_e$, and labeled with the same activity $a$.

\end{defn}

The DAFSA-annotated event log links every event in the log, based on its prefix, suffix, and activity label, to a DAFSA transition. 
Every event is labeled by the source state and target state of the DAFSA transition. 
Table~\ref{tbl:state_annotated} (columns ``Src. state'' and ``Tgt. state'') shows the DAFSA-annotated log.\footnote{Columns ``Norm. Rel. Time'', ``Prec.'' and ``PK'' are explained later.}

\begin{table}[hbtp]

	\centering
\caption{DAFSA State-Annotated Event log}

\scriptsize
	\begin{tabular}[t]{|p{0.4cm}|p{0.4cm}|c|p{0.4cm}|p{0.4cm}|p{0.4cm}|p{0.5cm}|p{0.4cm}|p{0.4cm}|}
\hline

\shortstack{Case\\ID}&\shortstack{Act.\\Label}&Timestamp&\shortstack{Src.\\State}&\shortstack{Tgt\\State}&\shortstack{Rel.\\Time}&	\textit{\shortstack{Nrm.\\Rel.\\Time}}	&	\textit{Prec.}	&	\textit{PK}	\\ \hline
\multirow{3}{*}{1}	&	A	&	8/8/2020 10:20	&	$s_0$	&	$s_5$	&	0	&	0	&	1	&	0.33	\\
	&	B	&	8/8/2020 10:50	&	$s_5$	&	$s_2$	&	30	&	0.33	&	0.67	&	0.75	\\
	&	C	&	8/8/2020 16:15	&	$s_2$	&	$s_3$	&	325	&	0.0	&	0.01	&	0.33	\\ \hline

\multirow{4}{*}{2}	&	D	&	8/8/2020 12:37	&	$s_0$	&	$s_4$	&	0	&	0	&	1	&	0.33	\\
	&	A	&	8/8/2020 14:37	&	$s_4$	&	$s_5$	&	120	&	1	&	0	&	0.35	\\
	&	E	&	8/8/2020 15:07	&	$s_5$	&	$s_2$	&	30	&	1	&	0	&	0.35	\\
	&	C	&	8/8/2020 20:31	&	$s_2$	&	$s_3$	&	324	&	0	&	0.01	&	0.33	\\
 \hline

\multirow{3}{*}{3}	&	A	&	8/9/2020 13:30	&	$s_0$	&	$s_5$	&	0.99	&	0.33	&	1	&	0.5	\\
	&	B	&	8/9/2020 13:55	&	$s_5$	&	$s_2$	&	25	&	0	&	0.67	&	0.5	\\
	&	C	&	8/9/2020 20:55	&	$s_2$	&	$s_3$	&	420	&	0.07	&	0.01	&	0.167	\\
\hline

\multirow{4}{*}{4}	&	D	&	8/9/2020 15:00	&	$s_0$	&	$s_4$	&	0.99	&	1	&	1	&	0.5	\\
	&	A	&	8/9/2020 17:00	&	$s_4$	&	$s_5$	&	120	&	1	&	0	&	0.35	\\
	&	B	&	8/9/2020 17:40	&	$s_5$	&	$s_2$	&	40	&	1	&	0.67	&	0.25	\\
	&	C	&	8/9/2020 23:05	&	$s_2$	&	$s_3$	&	325	&	0.0	&	0.01	&	0.33	\\ \hline

\multirow{3}{*}{5}	&	A	&	8/9/2020 17:25	&	$s_0$	&	$s_5$	&	0.99	&	0.33	&	1	&	0.5	\\
	&	E	&	8/9/2020 17:55	&	$s_5$	&	$s_2$	&	30	&	1	&	0	&	0.35	\\
	&	C	&	8/10/2020 23:55	&	$s_2$	&	$s_3$	&	1800	&	1	&	0.01	&	0.17	\\ \hline

\multirow{3}{*}{6}	&	A	&	8/11/2020 17:00	&	$s_0$	&	$s_5$	&	3	&	1	&	1	&	0.17	\\
	&	B	&	8/11/2020 17:27	&	$s_5$	&	$s_2$	&	27	&	0.13	&	0.67	&	0.5	\\
	&	C	&	8/11/2020 23:45	&	$s_2$	&	$s_3$	&	378	&	0.04	&	0.01	&	0.17	\\
 \hline

	\end{tabular}
\label{tbl:state_annotated}

\end{table}

Timestamps represent the time at which every event happened. However, it is more beneficial to compare the duration of activities in every group of suffixes and prefixes to prevent singling out an individual based on the duration of their activity and its timestamp. We calculate the relative time to compare the time within a group of traces that share the same suffix/prefix and activity label. The relative time of an event is the time difference between an event and its successor. Table~\ref{tbl:state_annotated} (column Rel. Time) shows the relative time estimated for the given events. For the start event of every case, the relative time is the difference between the case start event and the first event in the event log.

\subsection{Prior Knowledge Estimation}

\label{sec:app:prior}
Before the log release, attackers can use their knowledge to guess information about a specific individual. 
We estimate the prior knowledge of an attacker using the framework proposed in~\cite{laud2020framework}.
An attacker's guess $h(L)$ is considered successful if it falls within a range of values $H_p$, which is the actual value $\pm$ precision.

\begin{defn}[Prior Guessing Probability]\label{def:prior}
An attacker's prior guessing probability is $P := Pr[h(L) \in H_p]$


\end{defn}




A guessing precision $p$ is a percentage value representing the range of a successful guess $H_p$. For example, if the true value is 0.5 and $p= 0.2$, the guessed value is considered 
successful if it falls in range $H_p= [0.3,0.7]$. To interpret the precision in the range of values [0,1] (a percentage value), we pre-process the log to normalize the range of values (relative timestamps) to be in the range [0,1]. 
We assume that the precision for the start timestamp is one day and the precision for the relative time is 10 seconds.
Table~\ref{tbl:state_annotated} (columns Nrm. Rel. Time and Prec.) shows the normalized relative time (the normalization is based on the event's DAFSA transition group) and the estimated precision.

The prior knowledge $P_k$ can be estimated based on the data distribution as~\cite{laud2020framework}:

%
\begin{equation}\label{eqn:p_k}
\footnotesize
P_{k} = CDF(t_{k} + p \cdot r) - CDF(t_{k} - p \cdot r)\enspace,
\end{equation}

\noindent where CDF is the cumulative distribution function, $t_k$ is the value of 
an instance $k$ that falls in a range of values, $p$ is the  precision, and $r$ is the upper bound of the range of values.

Suppose we cannot estimate the probability distribution of input values due to the absence of data distribution (e.g., the likelihood of participating in a subtrace). In that case, we can compute the worst-case scenario $P_k$ as~\cite{laud2020framework}:
%
\begin{equation}\label{eqn:p_freq}
\footnotesize
P_k=(1-\delta) / 2  \text{ for all $k$}\enspace.
%
\end{equation}

To apply the prior knowledge to the DAFSA-annotated log, we group the events based on their DAFSA transition (source state, activity, and target state). For each group of values, we estimate the prior knowledge using Eq.~\ref{eqn:p_k}. Table~\ref{tbl:state_annotated} (column PK) shows the estimated prior knowledge for every instance with $\delta =0.3$.




\subsection{Case Filtering}

\label{sec:app:filtering}
In some cases, the prior knowledge $P_k$ is very high such that more noise is needed to keep the guessing advantage below the threshold $\delta$ after publishing the event log. We filter out the instances that violate $P_k+\delta \geq 1$ to reduce the injected noise in such a case. To fulfill~\ref{int:req:trace}, we filter out the entire case.

\begin{defn} [Case Filtering]
\label{def:case_filtering}
A case filter F is a function that filters out cases with at least one DAFSA transition that violates the condition $P_k+\delta \geq 1$.
\end{defn}

In Table~\ref{tbl:state_annotated}, activity B of the first case has a prior knowledge value $P_k=0.75$ and $\delta=0.3$. Our approach filters out the entire first case because at least one DAFSA transition violates the condition $P_k+\delta \geq 1$. Table~\ref{tbl:filtered_log} shows the filtered DAFSA-annotated event log.\footnote{The column ``$\epsilon_t$'' is explained later.} After case filtering, we re-estimate the prior knowledge for each event. Table~\ref{tbl:filtered_log} (column New PK) presents the newly estimated prior knowledge values.  Case filtering does not have an impact on privacy because it does not depend on private data.




\begin{table}[hbtp]
	
\caption{Filtered DAFSA State-Annotated Event log}

\scriptsize
\centering
\begin{tabular}[t]{|c|c|c|c|c|c|c|c|}
\hline

\shortstack{Case\\ID}&\shortstack{Act.\\Label}&Timestamp&\shortstack{Src\\State}&\shortstack{Tgt\\State}&\shortstack{Rel.\\Time}&\shortstack{New\\PK}&\textit{$\epsilon_t$}\\ \hline

\multirow{4}{*}{2}	&	D	&	8/8/2020 12:37	&	$s_0$	&	$s_4$	&	0	&	0.2	&	1.39	\\
	&	A	&	8/8/2020 14:37	&	$s_4$	&	$s_5$	&	120	&	0.35	&	1.24	\\
	&	E	&	8/8/2020 15:07	&	$s_5$	&	$s_2$	&	30	&	0.35	&	1.24	\\
	&	C	&	8/8/2020 20:31	&	$s_2$	&	$s_3$	&	324	&	0.2	&	1.39 \\
 \hline

\multirow{3}{*}{3}	&	A	&	8/9/2020 13:30	&	$s_0$	&	$s_5$	&	0.99	&	0.6	&	1.8	\\
	&	B	&	8/9/2020 13:55	&	$s_5$	&	$s_2$	&	25	&	0.33	&	1.24	\\
	&	C	&	8/9/2020 20:55	&	$s_2$	&	$s_3$	&	420	&	0.2	&	1.39	\\ \hline

\multirow{4}{*}{4}	&	D	&	8/9/2020 15:00	&	$s_0$	&	$s_4$	&	0.99	&	0.6	&	1.79	\\
	&	A	&	8/9/2020 17:00	&	$s_4$	&	$s_5$	&	120	&	0.35	&	1.24	\\
	&	B	&	8/9/2020 17:40	&	$s_5$	&	$s_2$	&	40	&	0.33	&	1.24	\\
	&	C	&	8/9/2020 23:05	&	$s_2$	&	$s_3$	&	325	&	0.2	&	1.39	\\ \hline

\multirow{3}{*}{5}	&	A	&	8/9/2020 17:25	&	$s_0$	&	$s_5$	&	0.99	&	0.6	&	1.79	\\
	&	E	&	8/9/2020 17:55	&	$s_5$	&	$s_2$	&	30	&	0.35	&	1.24	\\
	&	C	&	8/10/2020 23:55	&	$s_2$	&	$s_3$	&	1800	&	0.2	&	1.39	\\ \hline

\multirow{3}{*}{6}	&	A	&	8/11/2020 17:00	&	$s_0$	&	$s_5$	&	3	&	0.2	&	1.39	\\
	&	B	&	8/11/2020 17:27	&	$s_5$	&	$s_2$	&	27	&	0.33	&	1.24	\\
	&	C	&	8/11/2020 23:45	&	$s_2$	&	$s_3$	&	378	&	0.2	&	1.39	\\
\hline

\end{tabular}
\label{tbl:filtered_log}

\end{table}

\subsection{ \texorpdfstring{$\mathcal\epsilon$}{Lg} Estimation}

\label{sec:app:eps}
Given the filtered DAFSA-annotated event log, we need to estimate the amount of noise in order to anonymize the event log. We use DP, which quantifies the noise using $\epsilon$. One naive approach is using the same $\epsilon$ both for control-flow and timestamp anonymization. The two types of queries are different, and the quantity of noise has a different impact on each.
Given the attack model in Sect.~\ref{sec:attack_model}, 
for attack $h_1$, we use the lossless log representation (DAFSA) 
to apply an $\epsilon$-DP mechanism to the prefixes and suffixes of the traces in the log. For attack $h_2$, we apply a bounded $\epsilon$-DP mechanism w.r.t. the timestamp attribute (Def.~\ref{def:bdp_attribute}).
Consequently, we seek to quantify two $\epsilon$ values: the $\epsilon_d$ is required to anonymize the DAFSA, and the $\epsilon_t$ is required to anonymize the timestamp attribute.

Both $\epsilon_d$ and $\epsilon_t$ should guarantee that the attacker cannot single out an individual based on any prefix or suffix of their trace. Accordingly, we group the prefixes and suffixes in the log and estimate $\epsilon_d$ and $\epsilon_t$ for each group. 

\subsubsection{Estimating $\epsilon_t$}

To protect against attack goal $h_2$, we need to inject noise sufficient to achieve a given $\epsilon_t$, determined by the guessing advantage threshold.
Two approaches are possible.
The first one is to calculate a single $\epsilon_t$ for all events in the log. The second approach is to estimate an $\epsilon_t$ that minimizes the noise injected into each event individually. We use the second one.

We estimate $\epsilon_t$ for an event based on the group of prefix that ends with the event and the suffixes that starts with the event. Consequently, we adopt a \textit{personalized differential privacy} mechanism that uses a different $\epsilon_t$ value for every event. Personalized DP assumes that every individual has independent privacy specification $\Phi$~\cite{jorgensen2015conservative}. In other words, $\Phi_t =\{ (u_1,\epsilon_1), (u_2,\epsilon_2), ...., (u_m,\epsilon_m) \}$, where an individual $u\in U$, and the database has $m$ users.

\begin{defn}[Personalized Differential Privacy~\cite{jorgensen2015conservative}]\label{def:PDP}
Given a universe of cases $U$, a mechanism $M$ is said to be $\Phi$-personalized differentially private ($\Phi$-PDP) if all the event logs $L_1$ and $L_2$ differing at most on one value of a tuple $t$, and for all $S \subseteq Range (M)$:

\[ Pr[M(L_1) \in S] \leq exp(\Phi_t^u) \times Pr[M(L_2) \in S] \]

\noindent ...where $u \in U$ is the case that corresponds to tuple $t$, and $\Phi_t^u$ denote $u$'s privacy quantification.
\end{defn}




Note that differential privacy (cf.\ Defs~\ref{def:udp}, \ref{def:bdp}, and~\ref{def:bdp_attribute}) is a special case of personalized differential privacy (Def~\ref{def:PDP}), where all the $\epsilon$ values are the same~\cite{jorgensen2015conservative}.

In some settings, the PDP assumes that the $\epsilon$ values ($\Phi^u$) are given by the users. In this paper, we estimate the $\epsilon$ value for each event, i.e., we estimate $\Phi^u$.
We assume that the log publisher provides the maximum acceptable increase in the successful guessing probability after publishing the anonymized event log, called Guessing Advantage $\delta$. Laud et al.~\cite{laud2020framework} provide a framework that quantifies the $\epsilon$ value from a given guessing advantage $\delta$. They define the guessing advantage  
as the difference between the posterior probability (after publishing $M(L)$) and the prior probability (before publishing $M(L)$) (prior knowledge) of an attacker making a successful guess in $H_p$. Let $\delta$ be the maximum allowed guessing advantage, stated by the event log publisher.

\begin{defn}[Guessing Advantage]\label{def:ga}
Attacker's advantage in achieving the goal $h$ with precision $p$ is at most $\delta$ if, for any published event log $L'$, treating $L$ as a random variable,
%
\[Pr[h(L) \in H_p\ |\ M(L)=L'] - Pr[h(L) \in H_p] \leq \delta.\]
\end{defn}

Laud et al.~\cite{laud2020framework} quantify the maximum $\epsilon$ value that achieves the upper bound $\delta$ for every data instance.

\begin{prop}\label{prop:min_eps}
The maximum possible $\epsilon$ (i.e., the minimum noise) that achieves the upper bound $\delta$, w.r.t the above attack model, and fulfills the requirement~\ref{int:req:time} is
%
\begin{equation}\label{eqn:epsilon_time}
\footnotesize
\epsilon_{k}= - \ln\left( \frac{P_{k}}{1-P_{k}}(\frac{1}{\delta+ P_{k}} -1)\right) \cdot \frac{1}{r}\enspace,
\end{equation}
where $r$ is the maximum value in the range of values 
, and $P_k$ is the prior guessing probability for an instance $k$.
\end{prop}
\begin{proof}
The proof of Prop.~\ref{prop:min_eps} is available in Appendix.~\textbf{A.2}
\end{proof}

To quantify $\epsilon$ using Eq~(\ref{eqn:epsilon_time}), we need to study the distribution of values among every group of prefixes/suffixes. 
We quantify $\Phi_t$ values using the DAFSA-annotated event log to mitigate singling out an individual by their prefixes (or suffixes).
Given 
Definitions~\ref{def:dafsa}, and ~\ref{def:state_el}, all cases that share a common prefix will traverse a given state $s$ in the DAFSA corresponding to this prefix. The same holds for cases that share a common suffix. 

To anonymize the timestamps, we anonymize two components: (1) the start time of the case, which we define as the time difference between the start time of the case and the first start time in the event log); and (2) the execution time of each activity in the case, defined as the  difference between its execution timestamp and the timestamp of the successor activity in the same case. 

To anonymize the start time of each case, we group all the start times of the cases in the log to anonymize the fact that a given case happened on a specific day. Accordingly, we group events that have source state $s_0$ as a single group. To anonymize  the execution time of activities, we group the events that have the same source state, activity label, and target state in the DAFSA. For each group of the above, we use Eq~(\ref{eqn:p_k}) and~(\ref{eqn:epsilon_time}) to estimate a different $\epsilon_t$ value for every event ($\Phi_t$). 
Eq~(\ref{eqn:epsilon_time}) provides the maximum $\epsilon_t$ to fulfill~\ref{int:req:time}. The $\epsilon_t$ estimated for each event 
are shown in Table~\ref{tbl:filtered_log} (column $\epsilon_t$).


\subsubsection{Estimating $\epsilon_d$}

Parameter $\epsilon_d$ determines the noise to be applied to the occurrence count of each case variant in the log to prevent attack $h_1$. To estimate $\epsilon_d$ for a given group of prefixes or suffixes, we consider each group's size (count). This count is given by the following SQL query:

{
\begin{verbatim}
CT = SELECT SourceState, ActivityLabel, TargetState, COUNT(*)
FROM StateAnnotatedEventLog 
GROUP BY SourceState, ActivityLabel, TargetState
\end{verbatim}
}

The output of the above query is the \textit{DAFSA transition contingency table} (CT). We use the CT to estimate $\epsilon_d$ by using Eq~(\ref{eqn:p_freq}) and~(\ref{eqn:epsilon_time}). The $\epsilon_d$ of all the DAFSA transitions of the CT is the same, but the noise is drawn for every transition independently. The privacy proofs of $\epsilon$ estimation are presented in Appendix~\textbf{A.1}.

\begin{defn} [DAFSA Transitions Contingency Table]
A DAFSA transition contingency table $C$ is the histogram of counts for each transition $t=(s_i, a, s_e)$ of the DAFSA $D$, where $s_i$ is the source state, $a$ is the activity label, and $s_e$ is the target state of $t$.
\end{defn}

%

%
        
        
        

%
Table~\ref{tbl:dafsa_noise}~ shows the DAFSA transitions CT. These counts are called marginals. The marginals contain the correlations counts of the common sets of prefixes and suffixes of the DAFSA. 
We anonymize the marginals to prevent singling out that an individual has been through an activity, using the prefix and the suffix set of activities. 
Barak et al.~\cite{barak2007privacy} use an unbounded $\epsilon$-DP mechanism (cf.\ Definition~\ref{def:udp}) for marginal anonymization. Likewise, we use unbounded $\epsilon$-DP to anonymize the DAFSA transitions CT.

\begin{defn} [Differentially Private DAFSA Transitions Contingency Table]
\label{def:dp_dafsa}
Let $f$ be a query function that computes a DAFSA transitions contingency table with a set of transitions $t=(s_i, a, s_e)$ and a count cell $c_i$ for each transition. Let $M_f$ be an unbounded $\epsilon$-differentially private mechanism (by Definition~\ref{def:udp}) that injects noise into the result of $f$. A differentially private DAFSA transitions contingency table is defined as
$M(C):=\{(t_1,M_f(c_1)),(t_2,M_f(c_2)) ...,(t_n,M_f(c_n))\}$, where $n$ is the number of transitions.
\end{defn}

\subsection{ Case Sampling}

\label{sec:app:oversample}

Given the mechanism $M_f$ of Def.~\ref{def:dp_dafsa}, we need
to translate the noise estimated by the CT to the log.
Kifer et al.~\cite{kifer2011no} study the anonymization of CTs. They present the notion of a \textit{move} to define the anonymization of  the marginals of CTs. A move is a process that adds or deletes a tuple from the CT. 
We calculate the number of moves by drawing a random noise using $\epsilon_d$. 
Then, we translate the moves to a sampling of the cases that go through the same DAFSA transitions. 
Sampling may require zero or more moves.

The conference version of this article~\cite{elkoumy2021mine} has a stricter version of~\ref{int:req:trace}. It fulfills the requirement that ``the anonymized log must have the same set of case variants as the original log.''
We defined \textit{oversampling} as increasing a count (positive move) in the CT by replicating a random tuple in the log to fulfill such a requirement. 

\begin{defn} [DAFSA Transition Oversample]
\label{def:oversampling}
Given a DAFSA transition contingency table $C_i$, an oversample $O$ is a transformation that adds a  DAFSA transition instance to $C_i$, producing a contingency table $C_j=O(C_i)$, with an increase of only one count cell by 1. 

\end{defn}

This article considers a relaxation of the above requirement (cf.~\ref{int:req:trace}). We extend our approach to use sampling instead of oversampling, i.e., our approach performs both replication and deletion of cases from the log. We define \textit{sampling} as an increase or a decrease of a count in the CT.

\begin{table}[hbtp]

\scriptsize
\centering
\caption{DAFSA Transitions Contingency Table, and the generated Random Noise every DAFSA Transition for $\delta$=0.3 and estimated $\epsilon_d$=1.238}

    \begin{tabular}[t]{|c|c|c|c|c|}
        
        \hline
        
        \shortstack{Source\\State}     &     Activity     &     \shortstack{Target\\State}      & Count & Noise \\\hline
        
        $s_0$  &  A  &  $s_5$  & 3  & 0  \\\hline
        $s_5$  &  B  &  $s_2$  & 3  & 2 \\\hline
        $s_2$  &  C  &  $s_3$  & 5  & 1 \\\hline
        $s_0$  &  D  &  $s_4$  & 2  & -3 \\\hline
        $s_4$  &  A  &  $s_5$  & 2  & 0 \\\hline
        $s_5$  &  E  &  $s_2$  & 2  & 0 \\
         \hline
    \end{tabular}

\label{tbl:dafsa_noise}

\end{table}

The randomly generated sample size (based on $\epsilon_d$) can have a positive or a negative size. We draw a random noise value from the Laplace distribution $Lap(\Delta f/\epsilon_d)$, for each DAFSA transition. Table~\ref{tbl:dafsa_noise} shows the random noise for every DAFSA transition with the estimated $\epsilon_d$= 1.238. The positive size is translated to replicating a random tuple in the log. The negative size is translated to deleting a random tuple in the log. Adding or removing a prefix/suffix of a trace to the log affects a single frequency count by $1$, so $\Delta f = 1$.

\begin{defn} [DAFSA Transition Sampling]
\label{def:sampling}
Given a DAFSA transition contingency table $C_i$, a sample $Sample$ is a transformation that adds or deletes a  DAFSA transition instance from $C_i$, producing a contingency table $C_j=Sample(C_i)$, with an increase or a decrease of only one count cell by 1. 

\end{defn}

To 
avoid inserting new trace variants in the event log (cf.~\ref{int:req:trace}), we sample the prefix and the suffix of the sampled transition, i.e., we sample an entire case that goes through the sampled transition.

\begin{defn} [Case Sample]
\label{def:cases_oversampling}
Given an event log $L$, a case sample $Sample_{c_i}$ is a transformation that either duplicates or deletes a case $c_i$ of log $L$, that goes through a sampled DAFSA transition t, in such a way that 
it duplicates or deletes all the activities of $c_i$.

\end{defn}


\noindent\textbf{Timestamp Noise Injection}
At this step, we have the DAFSA annotated log, with the sampled case instances and an $\epsilon_t$ value of each event. First, if some cases have been replicated, we divide the $\epsilon_t$ value of the replicated cases by the number of replications, as it is considered repeating the same query more than once~\cite{dwork2014algorithmic}. Then, we draw a random noise from the Laplace distribution $Lap(\Delta f/\epsilon_t)$ to anonymize the relative time for every activity instance and the start time of each case. 
Finally, we transform 
the relative execution time of activities 
to timestamps.



Algorithm~\ref{alg:oversampling} presents the steps we perform in order to calculate the differentially private event log. This algorithm does not introduce new case variants so as to fulfill~\ref{int:req:trace}. 
Algorithm~\ref{alg:oversampling} starts by performing event log state-annotation as described in Sect.~\ref{sec:app:prep}. Then, we estimate the prior knowledge as described in Sect.~\ref{sec:app:prior}. Next, we perform case filtering as presented in Sect.~\ref{sec:app:filtering}. Then, we estimate $\Phi_t$, and $\epsilon_d$ as described in Sect.~\ref{sec:app:eps}. Then, we construct a correspondence (lookup) table between the 
DAFSA annotated  log and the case variants (line 7). This table maps every DAFSA transition to the case variants that traverse it. Also, we use this lookup table to track the updates over transitions. 
Second, we independently draw a random noise from the Laplace distribution $Lap(\Delta f/\epsilon_d)$ (line 8) for every transition. We initialize added noise counter to be zero (line 9). Next, we count the DAFSA transitions that need noise injection and their needed noise (lines 10). Next, we choose a random transition that needs noise (with their occurrence frequency as sampling weights) (line 12). Then, we randomly choose a case variant that goes through the chosen transition (with sampling weights of their number of instances) (line 13).
Next, based on the  noise (line 15), we either replicate (lines 16-17) or delete (lines 20-21) the chosen case variant by a number of times equals the  noise.
For every replication/deletion, we choose a random case variant instance from the  log to be replicated. 
We repeat this process until all the transitions have the minimum required noise. Finally, we divide $\Phi_t$ of the replicated cases by the number of replications (line 26).
By the end, the event log is anonymized by $\Phi=\{\epsilon_d, \Phi_t \}$ values calculated from the input maximum guessing advantage $\delta$.
\SetNlSty{}{}{.}

\begin{algorithm}
\scriptsize 
\hspace*{\algorithmicindent} \textbf{Input:}~$L$:~Event~Log, $\delta$: Guessing~Advantage~Threshold\\
 \hspace*{\algorithmicindent} \textbf{Output:}~$L'$: $\Phi$-Personalized~Differentially~Private~Event~Log \\
 $L$= eventLogStateAnnotation($L$)\;
 $p_k$=priorKnowledgeEstimation($L$,$\delta$)\;
 $L_f$= caseFiltering($L$, $p_k$)\;
 $\epsilon_d$, $\Phi_t$= $\epsilon$Estimation($L_f$,$p_k$, $\delta$)\;
 DafsaLookup $=$ Build DAFSA annotated event log to case variant lookup($L_f$) \;
 
 DafsaLookup[i].neededNoise$=z_i$, where $z_i$ is sampled from $Lap(\Delta f/\epsilon_d)$ independently for every transition $t_i$\;
 DafsaLookup .addedNoise=0\;
 cnt = count($|$DafsaLookup .addedNoise$|$ $<$ $|$DafsaLookup.neededNoise$|$) \;
 $L'$ = $L_f$\;
 
 \While{ cnt $>$ 0} {
  selectedTransition = pick a random transition such that $|$DafsaLookup.addedNoise$|$ $<$ $|$DafsaLookup.neededNoise$|$\;
  
  pickedTraces = pick $x$ random traces that traverse selectedTransition, where $x$ = selectedTransition.neededNoise\;
  
  \ForEach {$t \in$  pickedTraces  } {
  
  \uIf{DafsaLookup[t].neededNoise $>$ 0}
    {  DafsaLookup[t].addedNoise $++$ \;
     add to $L'$ a replica of a random case with a case variant $=$ t \;
    }
  \Else{
      DafsaLookup[t].addedNoise $--$ \;
      delete from $L'$ a random case with a case variant $=$ t \;
    }

  }
  cnt = count($|$DafsaLookup .addedNoise$|$ $<$ $|$DafsaLookup.neededNoise$|$) \;
  
 }
 
 DafsaLookup[i].$\Phi_{t_i}$=DafsaLookup[i].$\Phi_{t_i}$/DafsaLookup[i].NumOfReplicas, where $i$ is the replicated cases\;
 $L'$.timestamp= $L'$.timestamp + $z_i$, where $z_i$ is sampled from $Lap(\Delta f/$DafsaLookup[i].$\epsilon_{t_i}$)
 
\Return{ $L'$}
 
 \caption{Event Log Anonymization Algorithm}
 \label{alg:oversampling}

\end{algorithm}


Given $\Phi=\{\epsilon_d, \Phi_t \}$, estimated from the  parameter $\delta$, below we show that the output of Algorithm~\ref{alg:oversampling} ensures the $\Phi$-personalized differential privacy guarantees. 

\begin{lem}
\label{lem:dpel}
Let $L$ and $\delta$ be an event log and a maximum guessing advantage threshold. Log $L'$ = Alg1($L$,$\delta$) fulfills the two properties in Def~\ref{def:diff_priv_EL} (i.e.,\ 
$L'$ is $\Phi$-personalized differentially private), and $L'$ fulfills requirements~\ref{int:req:trace} and~\ref{int:req:time}.
\end{lem}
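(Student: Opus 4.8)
The plan is to decompose the claim into its four constituent obligations and discharge each separately: the two properties of Def.~\ref{def:diff_priv_EL}, namely (i) control-flow $\epsilon$-differential privacy and (ii) $\epsilon$-differential privacy w.r.t.\ the timestamp attribute, together with (iii) requirement~\ref{int:req:trace} and (iv) requirement~\ref{int:req:time}. For (i), I would start from Def.~\ref{def:dp_dafsa}: drawing independent Laplace noise of scale $\Delta f/\epsilon_d$ (with $\Delta f = 1$, since inserting or deleting one prefix/suffix shifts a single frequency by $1$) into each cell of the DAFSA transition contingency table is, by Barak et al.~\cite{barak2007privacy} and the Laplace-mechanism result stated after Def.~\ref{def:gs2}, an unbounded $\epsilon_d$-DP release (Def.~\ref{def:udp}). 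The remaining work is to show that the case-sampling loop of Alg.~\ref{alg:oversampling} faithfully realizes this noisy table: each unit of positive (resp.\ negative) noise on a transition corresponds to exactly one \emph{move} in the sense of Kifer et al.~\cite{kifer2011no}, i.e.\ replicating (resp.\ deleting) one whole case traversing that transition (Def.~\ref{def:sampling}, Def.~\ref{def:cases_oversampling}). Since the released log is a deterministic post-processing of these noisy counts, and post-processing preserves differential privacy, the control-flow view of $L'$ is $\epsilon_d$-DP; the sampling amplification of Li et al.~\cite{li2012sampling} then tightens this within each prefix/suffix group.

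For (ii), I would invoke the Laplace result once more: adding noise of scale $\Delta f/\epsilon_t$ to each event's relative time yields $\epsilon_t$-DP w.r.t.\ the timestamp attribute (Def.~\ref{def:bdp_attribute}). Proposition~\ref{prop:min_eps} supplies the bridge to the business-level metric, since the per-event $\epsilon_t$ computed from Eq.~\eqref{eqn:epsilon_time} is precisely the largest $\epsilon$ for which the guessing advantage of Def.~\ref{def:ga} stays below $\delta$. Two composition arguments then close the gap between per-event noise and the per-individual guarantee: the transition groups partition the events, so \emph{parallel} composition makes the whole-log timestamp guarantee the worst per-group $\epsilon_t$; and when a case is replicated $m$ times its relative times are effectively queried $m$ times, so \emph{sequential} composition (Dwork et al.~\cite{dwork2014algorithmic}) requires dividing $\epsilon_t$ by $m$ -- exactly the rescaling performed on line~26 -- to keep the per-case budget at $\epsilon_t$.

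Obligations (iii) and (iv) are comparatively direct. For~\ref{int:req:trace}, both operations that alter the structure of $L'$ -- case filtering (Def.~\ref{def:case_filtering}) and case sampling (Def.~\ref{def:cases_oversampling}) -- act on \emph{entire} cases, duplicating or removing complete activity sequences; hence every case variant of $L'$ already occurs in $L$, and no new directly-follows relation is introduced. For~\ref{int:req:time}, minimality of the injected timestamp noise is immediate from Prop.~\ref{prop:min_eps}, since $\epsilon_t$ is chosen maximal (equivalently, the Laplace scale $1/\epsilon_t$ minimal) subject to the $\delta$ constraint.

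The main obstacle I expect is reconciling the neighbouring-database relation of Defs.~\ref{def:udp} and~\ref{def:bdp_attribute} -- which differ in a single tuple or value -- with the fact that one individual corresponds to a whole case spanning many events and many DAFSA transitions, each perturbed independently. The delicate point is to argue that, under the independence-of-cases assumption and the attack model (where the adversary targets a single prefix/suffix for $h_1$, or a single activity's cycle time for $h_2$), the disjointness of the transition groups licenses parallel composition rather than a lossy summation of $\epsilon$ values across the events of a case. Carefully pinning down the sensitivity $\Delta f = 1$ for the control-flow query and verifying that replication is the only source of repeated queries -- so that sequential composition is confined to the $1/m$ rescaling -- is where the proof must be most careful to avoid an overly optimistic privacy accounting.
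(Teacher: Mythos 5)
Your decomposition into the two properties of Def.~\ref{def:diff_priv_EL} plus requirements~\ref{int:req:trace} and~\ref{int:req:time} is exactly the paper's decomposition, and most of the ingredients match: the Laplace mechanism with $\Delta f = 1$ on the contingency-table cells realized as whole-case replications/deletions, Prop.~\ref{prop:min_eps} as the bridge from $\delta$ to the maximal $\epsilon$, the $1/m$ rescaling for replicated cases, and the ``entire cases only'' argument for~\ref{int:req:trace}. The invocation of Li et al.'s sampling amplification is extra --- the paper does not use it anywhere in the proof --- but it is harmless.

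The genuine gap is in the step you yourself flag as the main obstacle, and your proposed resolution of it does not work. You appeal to \emph{parallel} composition on the grounds that the DAFSA transition groups partition the events. They do partition the events, but the privacy unit is a case, and a single case contributes one event to each of up to $k$ distinct transition groups ($k$ being its trace length). Parallel composition requires the sub-mechanisms to act on disjoint subsets of the \emph{individuals}, which fails here; a change to one case perturbs up to $k$ count cells and up to $k$ relative times, so a naive DP accounting would force a summation (equivalently, noise of scale $k/\epsilon$), exactly the ``lossy summation'' you hoped to avoid. The paper does not escape this via composition over disjoint groups. For the control-flow part it argues (in the appendix proposition on the contingency table) that the per-cell noise $Lap(1/\epsilon)$ suffices \emph{for the guessing-advantage bound specifically}, because the prior $P=\frac{1-\delta}{2}$ is already the worst case over any background knowledge the attacker could obtain, including the correlated cells; it explicitly concedes that pure $\epsilon$-DP of the whole table would require $Lap(k/\epsilon)$. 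For the timestamp part it uses \emph{sequential} composition and divides $\epsilon_k$ by the length of the longest trace to handle linearly correlated time differences --- a division your accounting omits (you only divide by the replication count from line~26 of Alg.~\ref{alg:oversampling}). Without either the worst-case-prior argument or the extra $1/m$ scaling, your per-event/per-cell $\epsilon$'s do not yield the per-individual guarantee claimed in the lemma.
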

\begin{proof}
See Appendix~\textbf{A} and Lemma~\textbf{A.8}.
\end{proof}

Algorithm~\ref{alg:oversampling} anonymizes the control-flow  and the timestamps using two separate mechanisms. The timestamp anonymization mechanism does not alter the order of activity instances within a trace (i.e. it does not alter the control-flow). In contrast, the control-flow anonymization mechanism sometimes affects the timestamps. For example, when a trace is duplicated for over-sampling purposes, this affects the $\epsilon$ guarantee, because we are effectively releasing twice the same anonymity trace. Hence, if a trace is duplicated $D$ times, the resulting release would be $(D \cdot \epsilon)$-differentially private~\cite{dwork2014algorithmic}.
Therefore, to ensure that the release is in fact $\epsilon$-differentially private, we first divide the $\epsilon$ for timestamp anonymization by the maximum number of duplications $\mathit{MaxD}$, and then we inject the noise required to achieve $(\epsilon/\mathit{MaxD})$ differential privacy (line 26). 
Given the above, the combination of the control-flow  anonymization mechanism and the timestamp anonymization mechanism, as presented in Algorithm~\ref{alg:oversampling}, does not invalidate either of the two differential privacy guarantees.

\subsection{Timestamp Compression}

\label{sec:app:post}
%

This article assumes that the start timestamps of the first and the last case in the original log are public. 
However, the timestamp anonymization introduces time shifts, 
making 
some cases happen before the original first timestamp or after the last timestamp of the log.
To reduce the impact of time shifts over the log, we perform timestamp compression as post-processing of the anonymized log. 



\begin{prop} [Differential Privacy under Post-processing~\cite{dwork2014algorithmic}]
\label{prop:postprocessing}
A post-processing algorithm $P$ of an event log $L$  gives $\Phi$-personalized differential private event log, if and only if it has been applied to the output of an algorithm $A$ that gives an $\Phi$-personalized differential private event log.

\end{prop}
\begin{proof}
The proof of Prop.~\ref{prop:postprocessing} is in~\cite{dwork2014algorithmic} (cf. Proposition 2.1)
\end{proof}

Given the $\Phi$-PDP log, 
 the post-processing of $M(L)$ output is differentially private. 
We use the public information, the start timestamp of the first and the last cases in the event log, to post-process the anonymized log. We perform timestamp compression to make the cases' timestamp fall between the original start timestamp of the first and the last cases in the  log. We multiply the relative time values with a compression factor. We define the compression factor as:

\begin{equation}\label{eqn:cmprs_factor}
\scriptsize
Compression~Factor = \frac{Original~Range}{Anonymized~Range + Orignal~Range} * \frac{1}{2},
\end{equation}
where the original range is the difference in days between the start timestamp of the first and the last cases in the original event log, and the anonymized range is the difference in days between the start timestamp of the first and the last cases in the anonymized event log. Table~\ref{tbl:anonymized_log} shows the anonymized version of the input event log in Table~\ref{tbl:event_log} with a guessing advantage threshold $\delta$=0.3.

Lastly, we generate new Case IDs for the anonymized log cases that do not link to the original log. Also, we reorder the events in the log by their new timestamps.

\begin{table}[hbtp]

	\centering
\caption{Differentially Private Event Log with $\delta$=0.3}

\scriptsize
	\begin{tabular}[t]{|c|c|c|}
	
\hline

Case ID	&	Activity	&	Timestamp	\\ \hline
66d19fc868978d2fc1e	&	D	&	2020-08-08 04:26:24	\\
66d19fc868978d2fc1e	&	A	&	2020-08-08 07:03:57	\\
66d19fc868978d2fc1e	&	E	&	2020-08-08 07:43:20	\\
66c81c1d1a9773464aa	&	D	&	2020-08-09 12:16:42	\\
66d19fc868978d2fc1e	&	C	&	2020-08-09 13:33:07	\\
66c81c1d1a9773464aa	&	A	&	2020-08-09 14:54:15	\\
66c81c1d1a9773464aa	&	B	&	2020-08-09 15:53:27	\\
c4247e4c1e9292166cd	&	A	&	2020-08-09 18:44:10	\\
c4247e4c1e9292166cd	&	E	&	2020-08-09 19:23:33	\\
efd62407e7f2b016e33	&	A	&	2020-08-09 19:34:35	\\
efd62407e7f2b016e33	&	B	&	2020-08-09 20:00:23	\\
efd62407e7f2b016e33	&	C	&	2020-08-10 00:49:12	\\
66c81c1d1a9773464aa	&	C	&	2020-08-10 04:41:39	\\
64b00e4dfc2b2145e17	&	D	&	2020-08-10 05:17:48	\\
64b00e4dfc2b2145e17	&	A	&	2020-08-10 07:55:20	\\
64b00e4dfc2b2145e17	&	B	&	2020-08-10 08:44:53	\\
64b00e4dfc2b2145e17	&	C	&	2020-08-11 00:59:32	\\
c4247e4c1e9292166cd	&	C	&	2020-08-11 06:16:25	\\
cde714d92c42217500a	&	A	&	2020-08-11 07:02:37	\\
cde714d92c42217500a	&	B	&	2020-08-11 07:33:55	\\
cde714d92c42217500a	&	C	&	2020-08-11 15:26:28	\\
\hline

	\end{tabular}
\label{tbl:anonymized_log}

\end{table}


\section{Evaluation}
\label{sec:eval}
%


To address the problem stated in Sect.~\ref{sec:intro} under requirements~\ref{int:req:trace} and~\ref{int:req:time}, the proposed method injects differentially private noise in two ways: (i) by sampling and filtering some of the traces in the log; and (ii) by altering the event timestamps. The noise injection and case filtering affect the utility of the anonymized logs. We measure the effect of anonymization on the utility by comparing the anonymized logs against the original ones. We compare the performance of different design choices.
Also, we compare the proposed approach against the state-of-the-art.

Accordingly, we define the following evaluation questions:
\begin{enumerate}[label=\textbf{EQ\arabic*.}]
    \item \label{res:rq:filtering} What is the effect of case filtering and sampling on the output utility?
    \item \label{res:rq:baseline} Does the proposed approach outperform the state-of-the-art baselines in terms of the output utility?
    \item \label{res:rq:performance} What is the difference between different design choices and the state-of-the-art in terms of computational efficiency?
\end{enumerate}

\subsection{Evaluation Measures}


Given a log, a typical output of process mining tools is the DFG. To measure the utility loss of the anonymization on the DFG, we compare the DFG resulting from the anonymized log against the original one. To measure the difference between two DFGs, we use the \textit{Earth Movers' Distance} (EMD)~\cite{ramdas2017wasserstein}.
The EMD has been used as a log comparison metric metric in several process mining studies such as in~\cite{rafiei2021group,rafiei2020towards,LeemansABP21}.
The EMD between two distributions $u$ and $v$ is the minimum cost of transforming $u$ into $v$. The cost is the distribution weight that needs to be moved, multiplied by the distance it needs to move. Formally:
\begin{equation}
\footnotesize
EMD (u,v)= \inf_{\pi \in \Gamma(u,v)} \int_{\mathbb{R}\times \mathbb{R}} |x - y | d\pi(x,y),
\end{equation}
 where $\Gamma(u,v)$ is the set of distributions on $\mathbb{R}\times \mathbb{R}$ whose marginals are $u$ and $v$.

Increasing the log size makes it unpractical to perform anonymization within sufficient execution time~\cite{elkoumy2021privacy}. Hence, the privacy computational models should be scalable in order to preserve privacy practically.
Consequently, we conduct a wall-to-wall run time experiment to assess the efficiency of the method. We measure the time between reading the input XES file and the generation of its anonymized version.




\subsection{Datasets}

To answer our evaluation questions, we rely on the real-life event logs publicly available at 4TU Centre for Research Data\footnote{\url{https://data.4tu.nl/}} as of February 2021. 
We considered the logs mentioned in Appendix~\textbf{B} Table~\textbf{1}.
The selected logs contain the process execution of different domains, e.g., government and healthcare.
From the set of available logs, we excluded the event logs that are not business processes (e.g., ``Apache Commons'', ``BPIC 16'' logs, ``Junit 4.12"). 
Also, we exclude the set of event logs ``coSeLog" as they are a pre-processed version of BPI challenge 15. 
Finally, we select a single log for each set of logs in BPIC 13, 14, 15, 17, 20.

\subsection{Experiment Setup} 

\label{sec:exp_setup}
We implement the proposed model as part of a prototype, namely Amun\footnote{\url{https://github.com/Elkoumy/amun}}. 
We run the experiment on a single machine with  AMD Opteron(TM) Processor 6276 and 32 GB memory. We time out any experiment at 24 hours.
Also, in our experiment, we consider only the end timestamp to calculate the relative time of an event for simplicity, and the same approach is still valid to apply DP. Further, we keep only the three attributes in every event log: case ID, Activity, and timestamp.

We evaluate and compare the different design choices of our approach. We evaluate the oversampling presented in the conference version of this article~\cite{elkoumy2021mine}, the proposed approach using all the proposed steps in Sect.~\ref{sec:approach} (filtering risky cases then sampling), and the proposed approach without the third step (sampling without filtering). We use the  EMD to compare the anonymized event log against the original for all the design choices.

We compare the proposed approach  against the state-of-the-art. The studies that consider PPPM from the case perspective are~\cite{mannhardt2019privacy,fahrenkrog2020pripel,fahrenkrog2021sacofa,rafiei2020tlkc,rafiei2021group}. In our comparison, we do not include the work in~\cite{rafiei2020tlkc,rafiei2021group} because the parameters' interpretation of the k-anonymity privacy model is different from the DP model.
The studies~\cite{mannhardt2019privacy,fahrenkrog2020pripel,fahrenkrog2021sacofa} adopt DP.
Mannhardt et al.~\cite{mannhardt2019privacy} anonymize two types of queries: the query ``frequencies of directly-follows relations'' and ``frequencies of trace variants''. The output of the anonymization of~\cite{mannhardt2019privacy} is not an event log.
PRIPEL~\cite{fahrenkrog2020pripel} anonymizes the event log while adopting the trace variant queries anonymization that has been proposed in~\cite{mannhardt2019privacy}. We compare the proposed approach against~\cite{fahrenkrog2020pripel}. SaCoFa~\cite{fahrenkrog2021sacofa} anonymizes the case variant queries. The output is an event log without the time attribute. We include SaCoFa~\cite{fahrenkrog2021sacofa} only in the EMD frequency experiments because it does not consider timestamp anonymization. PRIPEL and SaCoFa take three input parameters, namely $\epsilon$, k, and N. 
To select the parameters' values, we run several experiments for different values of the pruning parameter k (0.5\%, 1\%, and 5\% of the cases), and we select the best results. For each parameter value, we run 10 experiments, and we take the average value. For the maximum trace length, we set $N$ to the average trace length of the log. Other experiments with different parameter values can be found in Appendix~\textbf{B}.

PRIPEL accepts a single $\epsilon$ value for both the trace variant anonymization and the timestamp attribute anonymization, and SaCoFa adapts a single $\epsilon$ value for all the events. On the contrary, our proposed approach uses different $\epsilon$ values ($\Phi$). Consequently, we use the average, the minimum, and the maximum  values of the estimated $\Phi$ values in our approach as input to PRIPEL and SaCoFa and we evaluate the output using the EMD. We use SaCoFa with only the  frequency  EMD because it anonymizes the case variant query only.  Below, Table~\ref{tbl:emd} presents the average estimated $\epsilon$ values. Tables B.2 and B.3 in Appendix~\textbf{B} present the minimum and maximum $\epsilon$ values. All the anonymized logs are available in Appendix~\textbf{B}.

\subsection{Results}
\label{sec:results}

\begin{table*}[hbtp]

	\caption{Earth Movers' Distance for the output of different anonymization approaches using the average value of $\Phi$. A ``-'' means that the approach ran out of memory or timed out.}
	\label{tbl:emd}
\scriptsize

\centering
	\begin{tabular}{ p{0.8cm} p{0.2cm} p{0.6cm} c c c c cc c c c   }
		\toprule
        \multirow{2}{*}{Log}		&	\multirow{2}{*}{$\delta$} 	&	\multirow{2}{*}{avg($\Phi$)}	&	 \multicolumn{4}{c}{EMD Freq} &	 \multicolumn{4}{c}{EMD Time}	\\
       
        	\cmidrule(lr){4-8} \cmidrule(l){9-12}	&		&		&	   $Amun_s$	&	  $Amun_f$	&	  $Amun_o$	&	  $PRIPEL$ & $SaCoFa$	&	  $Amun_s$	&	  $Amun_f$	&	 $Amun_o$	&	 $PRIPEL$	\\
        \hlineB{2}
        
        \multirow{3}{*}{BPIC12}	&	0.2	&	1.48	&	\textbf{331.02}	&	653.36	&	2301.62	&	946.9 &1006.60	&	40.30	&	\textbf{8.08}	&	192.13	&	25.75	\\
        	&	0.3	&	2.00	&	\textbf{212.64}	&	742.39	&	1597.79	&	966.99	&969.29& 	20.32	&	\textbf{13.54}	&	107.96	&	26.7	\\
        	&	0.4	&	2.47	&	\textbf{142.37}	&	785.87	&	1275.43	&	966.88	&876.56&	\textbf{12.68}	&	16.74	&	72.56	&	26.7	\\\hline

        \multirow{3}{*}{BPIC13}	&	0.2	&	1.49	&	1131.91	&	\textbf{1053.45}	&	7613.27	&	3771.09	&2992.09&	778.18	&	811.51	&	3343.92	&	\textbf{197.41}	\\
        	&	0.3	&	1.99	&	840.55	&	\textbf{258.45}	&	5417.64	&	3792.55&2562.55	&592.36	&	307.26	&	2055.23	&	\textbf{197.38}	\\
        	&	0.4	&	2.44	&	\textbf{558.45}	&	2450.45	&	4296.73	&	3775.82	&2623.18&	486.86	&	\textbf{75.13}	&	1751.96	&	197.19	\\\hline

        \multirow{3}{*}{BPIC14}	&	0.2	&	1.25	&	429.60	&	\textbf{395.24}	&	1905.69	&	531.42	&494.34&	132.53	&	130.00	&	577.32	&	\textbf{10}	\\
        	&	0.3	&	1.74	&	298.86	&	\textbf{281.43}	&	1333.95	&	-	&488.67&	82.02	&	\textbf{76.37}	&	321.72	&	-	\\
        	&	0.4	&	2.27	&	208.55	&	\textbf{179.94}	&	1012.62	&	-	&385.8&	\textbf{47.12}	&	50.61	&	178.08	&	-	\\\hline

        \multirow{3}{*}{BPIC15}	&	0.2	&	0.42	&	20.71	&	\textbf{18.74}	&	80.61	&	-	&20.84&	5.68	&	\textbf{4.18}	&	22.47	&	-	\\
        	&	0.3	&	0.69	&	14.82	&	\textbf{7.93}	&	52.42	&	10.71	&10.84&	3.15	&	2.28	&	11.17	&	\textbf{0.8}	\\
        	&	0.4	&	0.99	&	12.60	&	\textbf{2.79}	&	39.09	&	-	&10.30&	2.46	&	\textbf{1.05}	&	8.87	&	-	\\\hline

        \multirow{3}{*}{BPIC17}	&	0.2	&	1.81	&	\textbf{141.37}	&	1925.92	&	1159.08	&	2454.47	&1601.54&	117.56	&	\textbf{75.98}	&	268.08	&	109.10	\\
        	&	0.3	&	2.21	&	\textbf{78.60}	&	2667.91	&	938.79	&	-	&1271.89&	\textbf{95.07}	&	131.76	&	206.76	&	-	\\
        	&	0.4	&	2.66	&	\textbf{49.93}	&	2674.76	&	938.79	&	-	&1215.84&	\textbf{79.33}	&	133.77	&		206.76&	-	\\\hline

        \multirow{3}{*}{BPIC18}	&	0.2	&	0.91	&	3775.468&	\textbf{659.01}&	17668.46
					&	-	&3555.97&	2176.17 &	\textbf{179.39}	& 3206.75
					&	-	\\
        	&	0.3	&	1.31	&	2922.80	&	\textbf{1752.72}	&	12063.81	&	-&3763.41	&	1319.69	&	\textbf{325.14}	&	2201.39	&	-	\\
        	&	0.4	&	1.72	&	\textbf{2372.08}	&	2812.65	& 9055.04		&	-&3789.32	&	885.81	&	\textbf{517.59}	&	1647.15	&	-	\\\hline

        \multirow{3}{*}{BPIC19}	&	0.2	&	2.96	&	946.99	&	\textbf{811.80}	&	4509.08	&	-	&1037.15&	\textbf{524.18}	&	608.02	&	1513.92	&	-	\\
        	&	0.3	&	3.51	&	743.96	&	\textbf{572.71}	&	3094.05	&	-	&936.21&	\textbf{491.84}	&	498.59	&	1054.23	&	-	\\
        	&	0.4	&	4.00	&	612.48	&	\textbf{399.36}	&	2376.21	&	-	&872.76&	500.57	&	\textbf{360.54}	&	751.67	&	-	\\\hline

        \multirow{3}{*}{BPIC20}	&	0.2	&	2.73	&	18.97	&	\textbf{18.69}	&	138.05	&	98.10	&142.92&	65.23	&	69.04	&	180.50	&	\textbf{23.94}	\\
        	&	0.3	&	3.27	&	14.88	&	\textbf{10.42}	&	99.76	&	-	&126.27&	44.46	&	\textbf{40.01}	&	109.90	&-	\\
        	&	0.4	&	3.75	&	10.55	&	\textbf{2.24}	&	76.61	&	-	&88.85&	41.70	&	\textbf{35.78}	&	82.67	&	-	\\ \hline

        \multirow{3}{*}{CCC19}	&	0.2	&	0.23	&	12.78	&	\textbf{3.70}	&	30.81	&	-	&-&	\textbf{0.00}	&	\textbf{0.00}	&	\textbf{0.00}	&	-	\\
        	&	0.3	&	0.54	&	4.55	&	\textbf{2.23}	&	25.77	&	-	&-&	\textbf{0.00}	&	\textbf{0.00}	&	\textbf{0.00}	&	-	\\
        	&	0.4	&	0.85	&	5.54	&	\textbf{3.21}	&	16.64	&	-	&-&	\textbf{0.00}	&	\textbf{0.00}	&	\textbf{0.00}	&	-	\\\hline

        \multirow{3}{*}{CredReq}	&	0.2	&	1.39	&	\textbf{0.00}	&	2.00	&	4.00	&	\textbf{0.00}	&0&	233.08	&	238.79	&	186.94	&	\textbf{0.00}	\\
        	&	0.3	&	1.80	&	\textbf{0.00}	&	\textbf{0.00}	&	3.00	&	\textbf{0.00}	&0&	203.94	&	182.39	&	220.17	&	\textbf{0.00}	\\
        	&	0.4	&	2.03	&	\textbf{0.00}	&	2.00	&	2.00	&	\textbf{0.00}	&0&	201.81	&	231.37	&	180.77	&	\textbf{0.00}	\\ \hline

        \multirow{3}{*}{Hospital}	&	0.2	&	0.21	&	81.40	&	\textbf{75.74}	&	406.45	&	-	&85.24&	\textbf{10.30}	&	11.33	&	55.44	&	-	\\
        	&	0.3	&	0.41	&	\textbf{57.91}	&	61.96	&	264.09	&	-	&84.86&	7.86	&	\textbf{7.42}	&	31.71	&	-	\\
        	&	0.4	&	0.60	&	\textbf{58.68}	&	63.93	&	210.40	&	-	&84.24&	6.07	&	\textbf{6.01}	&	20.70	&	-	\\ \hline

        \multirow{3}{*}{Sepsis}	&	0.2	&	1.31	&	56.84	&	\textbf{32.20}	&	427.64	&	118.02	&121.3&	8.97	&	\textbf{4.37}	&	61.50	&	8.68	\\
        	&	0.3	&	1.83	&	\textbf{28.46}	&	67.97	&	286.23	&	118.63	&114.23&	6.35	&	\textbf{3.60}	&	32.93	&	8.64	\\
        	&	0.4	&	2.37	&	\textbf{43.38}	&	101.64	&	232.50	&	118.71	&96.84&	\textbf{4.26}	&	6.48	&	26.61	&	8.64	\\ \hline

        \multirow{3}{*}{Traffic}	&	0.2	&	4.50	&	1.64	&	\textbf{0.90}	&	33.50	&	-	&36.67&	8250.42	&	\textbf{7253.29}	&	8627.86	&	-	\\
        	&	0.3	&	5.26	&	\textbf{0.61}	&	8.01	&	25.51	&	-	&32.90&	7767.32	&	\textbf{6720.80}	&	7081.06	&	-	\\
        	&	0.4	&	5.28	&	\textbf{0.00}	&	2951.50	&	21.00	&	-	&22.50&	\textbf{7182.96}	&	12788.57	&	7419.02	&	-	\\ \hline

        \multirow{3}{*}{Unrine.}	&	0.2	&	2.87	&	6.53	&	\textbf{5.53}	&	66.00	&	290	&25.60&	44.96	&	\textbf{39.34}	&	184.68	&	94.74	\\
        	&	0.3	&	3.41	&	\textbf{0.00}	&	2.87	&	53.67	&	290.47	&24.70&	\textbf{28.12}	&	33.27	&	103.67	&	94.73	\\
        	&	0.4	&	3.84	&	\textbf{1.47}	&	2.53	&	47.00	&	290.47	&2.60&	\textbf{21.89}	&	26.35	&	90.93	&	94.73	\\ 
\bottomrule

	\end{tabular}
	
\end{table*}


Table~\ref{tbl:emd} shows the results for both the frequency and time annotated DFG.
 A ``-'' indicates that the approach runs out of memory (32 GB) or times out (24 hours).  $avg(\Phi)$  refers to the average $\epsilon$ value (in $\Phi$) estimated by the proposed approach for anonymization. We use the directly-follows frequency between activities to annotate the frequency-annotated DFG.
We use the total relative time between two activities to annotate the time-annotated DFG. The time EMD distance is measured in terms of months. The best result for every input $\delta$ is 
in bold. 
$Amun_f$ outperforms other settings in both the frequency and time EMD in most of the logs because case filtering decreases the needed noise to anonymize the timestamp, and hence $Amun_f$ has a lower utility loss. 

The utility loss differs across logs. We see a decrease in the utility loss for structured logs such as Credit Requirement, Unrineweginfectie (urinary tract infection), and Traffic Fines. For example, the output of $Amun_s$ for Unrineweginfectie has a maximum frequency EMD of $6.53$, and for the same log, $Amun_f$ has a maximum EMD of $5.53$. This difference is because $Amun_f$ filters out cases to reduce the timestamp noise injection. PRIPEL has a frequency EMD of $290$, and SaCoFa has a frequency EMD of $25.6$. That happens due to infrequent case variants trimming.

With anonymizing unstructured event logs, which is more challenging due to the uniqueness of cases, we see an increase in utility loss. For instance, the output of $Amun_s$ for the Sepsis cases log has a maximum frequency EMD of $56.84$, and for the same log, $Amun_f$ has a maximum EMD of $101.04$. The increase of the utility loss is significant with $Amun_f$ because it filtered out $580$ case variants, in contrast to $Amun_s$, which dropped out only $87$ case variants. For the same log, PRIPEL has a maximum EMD of $118.71$. That happens because PRIPEL filtered out $839$ case variants. SaCoFa has a maximum EMD of $121.3$ because it filtered out 787 case variants and added new 19 case variants (false positives).


The oversampling setting ($Amun_o$) 
has the largest frequency and time EMD due to duplicating cases and dividing $\Phi_t$ by the number of duplications. PRIPEL has a time EMD that is close to the filtering settings. However, the used $\epsilon$ with PRIPEL is the average estimated $\Phi$. Thus, the proposed approach can provide similar or better time EMD with stronger privacy metrics. The effect of using the same $\epsilon$ in PRIPEL for both the case variant and timestamp anonymization appears in the frequency EMD. The proposed approach has a lower frequency EMD than PRIPEL and SaCoFa in all the logs. The same conclusion could be driven from Tables B.2 and B.3 in Appendix~\textbf{B} which contain the EMD of the approaches for the minimum and maximum $\epsilon$ values.

\begin{figure}
    \centering
    
       \subfigure[Unrineweginfectie Active cases over time]
    {
       \centering
         \includegraphics[width=1\columnwidth]{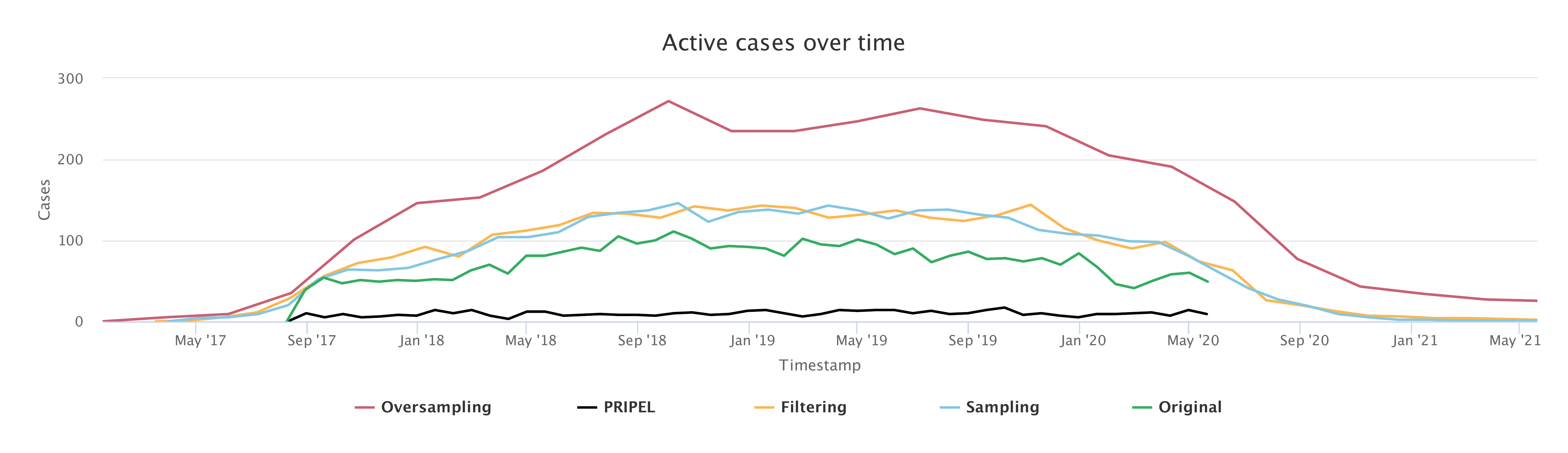}
          
         \label{fig:unrine_cases_over_time}
        
    }
    \\
    \subfigure[Sepsis Active Cases over Time]
    {
       \centering
       
         \includegraphics[width=1\columnwidth]{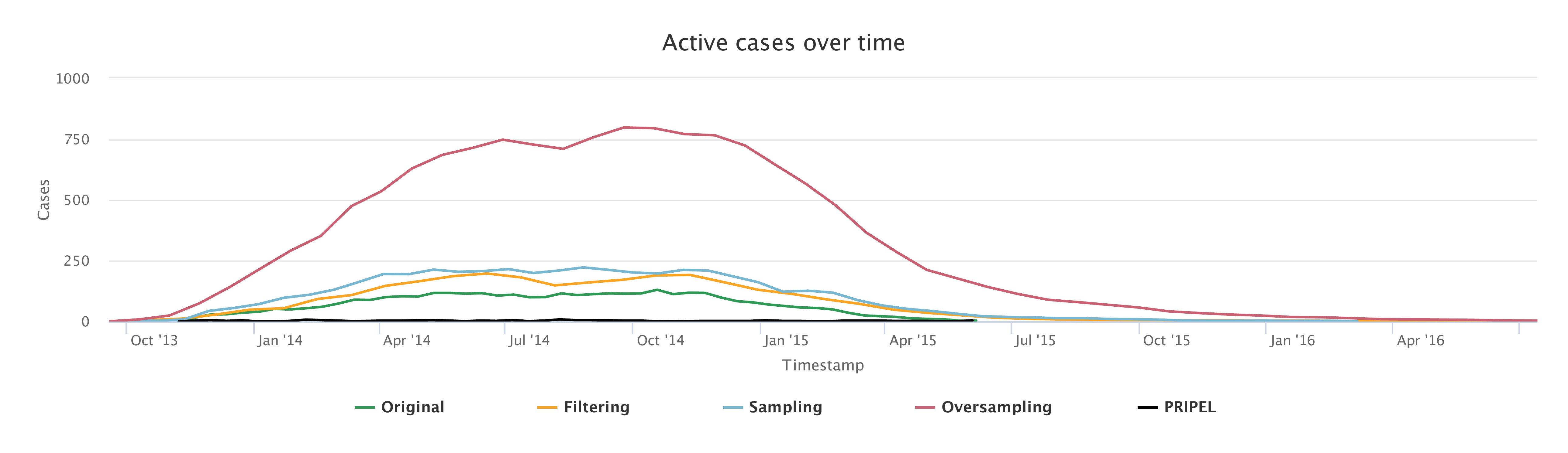}
         \label{fig:sepsis_cases_over_time}
         
    }
    
    \caption{Variant Analysis comparison between Unrine. and Sepsis  event logs and their anonymized versions, with $\delta=0.2$, average $\epsilon=1.31$ for Sepsis, and average $\epsilon=2.87$ for Unrine. The figures are zoomed by 70\%.}
        \label{fig:variant_analysis}
    
\end{figure}

We compare the variant analysis of different design choices for structured and unstructured  logs. 
Fig.~\ref{fig:variant_analysis} shows a variant analysis between Unrineweginfectie and Sepsis logs and their anonymized versions by different design choices and PRIPEL.
All the design choices have their best utility for the Unrineweginfectie log, which has 1650 cases and only 50 case variants.
Fig.~\ref{fig:unrine_cases_over_time} shows that $Amun_s$, $Amun_f$, and PRIPEL result in logs with close active cases over time to the original log. The false negatives in the anonymized logs are 1, 2 and 34 for $Amun_s$, $Amun_f$ and PRIPEL, respectively. However, $Amun_o$ adds more noise than other approaches, though it keeps the false negatives to zero due to oversampling.

Fig.~\ref{fig:sepsis_cases_over_time} shows the active cases over time for the sepsis cases event log, with 1050 cases and 846 case variants. For both $Amun_s$ and $Amun_f$ the anonymized Sepsis cases has a closer behavior to the original log. $Amun_o$ generates more noise with unstructured event logs than structured logs. The anonymized log by PRIPEL has fewer case variants than the original log. The false negatives in the anonymized logs are 37, 317, and 839 for $Amun_s$, $Amun_f$, and PRIPEL, respectively.

     

        

\begin{table}[hbtp]

	\caption{Execution time experiment. The time is measured in minutes for an input $\delta=0.2$. A ``-'' means that the approach ran out of memory or timed out.}
	
	\label{tbl:time}
\scriptsize
\centering
	\begin{tabular}{c  c c c c}
		\toprule
dataset	&		$Amun_f$ &	$Amun_s$	&	$Amun_o$&	PRIPEL	\\
\hline
BPIC12	&		4.44	&	7.38	&	12.50	&	24.35	\\
BPIC13	&		1.11	&	1.09	&	3.90	&	4.20	\\
BPIC14	&		44.57	&	43.83	&	101.40	&	-	\\
BPIC15	&		4.38	&	4.45	&	8.40	&	-	\\
BPIC17	&		6.88	&	10.80	&	17.40	&	1.46	\\
BPIC18	&		128.32	&	321.59	&	542.50	&	226	\\
BPIC19	&		96.89	&	52.73	&	87.50	&	-	\\
BPIC20	&		2.64	&	2.52	&	4.10	&	3.18	\\
CCC19	&		0.04	&	0.09	&	0.10	&	-	\\
CreditReq	&		1.21	&	1.35	&	1.20	&	14.50	\\
Hospital	&		11.27	&	10.65	&	34.10	&	-	\\
Sepsis	&		0.87	&	1.00	&	1.90	&	0.05	\\
Traffic	&		9.71	&	11.35	&	8.20	&	-	\\
Unrine.	&		0.14	&	0.16	&	0.20	&	0.02	\\

\bottomrule

	\end{tabular}
	
\end{table}

We conduct a wall-to-wall run time experiment to assess the efficiency of the method. We measure the time between reading the input XES file and the generation of its anonymized version. The results are reported in Table~\ref{tbl:time}, and the values are in minutes. The run time increases with case variants (as it contains more DAFSA transition groups). 
The execution times for logs with numerous events and with low $\delta$ values are in the order of hours, e.g., 2.13 hours for BPIC18 (2.5 million events) with $\delta$=0.2 because the noise injection algorithm iterates multiple times over each transition (lines 15-23 in Algorithm~\ref{alg:oversampling}), and the number of DAFSA states for this log is high (638,242 states). This shortcoming can be tackled via parallelization, as the privacy quantification over each DAFSA transition is independent of others. PRIPEL does not scale and runs out of memory for event logs that have numerous cases and events.
The above experiments were all done using one single thread to avoid bringing additional variables (number of computing nodes and cores) into the experiments.

The results in Table~\ref{tbl:emd} and Fig.~\ref{fig:variant_analysis} show that using different $\epsilon$ values for case variant and timestamp anonymization leads to  lower utility loss. Moreover, the use of personalized DP to estimate different $\epsilon$ per event leads to a stronger privacy guarantee with lower utility loss.

We acknowledge that the above observations are based on a limited population of logs (14). However, these logs were selected from a broader population of close to 50 real-life logs.


\section{Conclusion and Future Work}

\label{sec:conclusion}

This article proposed a concept of the differentially private event log and a mechanism to compute such logs. A differentially private event log limits the increase in the probability that an attacker may learn a suffix of an individual's trace given a prefix (or vice-versa) or the timestamp of activity in an individual's trace. To this end, we inject differentially private noise by sampling the traces in the log. This approach does not add case variants (cf. Sect.~\ref{sec:app:oversample}) and hence fulfills~\ref{int:req:trace}. To fulfill~\ref{int:req:time}, we quantify $\epsilon$ based on a technique that finds the maximum $\epsilon$ (minimum noise) that keeps the guessing advantage below $\delta$ (cf. Sect.\ref{sec:attack_model} and Prop.~\ref{prop:min_eps}). 

The evaluation show that the proposed approach is a step toward anonymizing event logs while preserving the utility of the process mining analysis. The proposed approach outperforms the baselines in terms of  earth movers' distance and generalization to all fourteen real-life  logs selected in the evaluation. Furthermore, the approach can process large size  logs in practical memory size (32 GB).

A limitation of the proposed method is that it anonymizes the timestamp without performing calendar anonymization, violating organizational rules (e.g., single activity at each timestamp). A possible avenue for future work is to model and include organizational rules while anonymizing logs. A second limitation is that the input log is assumed to have only three columns: case ID, activity, and timestamp. Real-world event logs 
contain other columns, e.g., resources. 
To do so, we need to extend the log representation, e.g., via multidimensional data structures instead of DAFSAs. Furthermore, we consider adapting  a utility-aware case filtering for unstructured event logs as a future research direction.

\section*{Acknowledgment} 

Work funded by European Research Council (PIX project) and by EU H2020-SU-ICT-03-2018 Project No.830929 CyberSec4Europe.


\appendix

\section{Proof of Privacy}
\label{apndx:proofs}
In this section, we prove that the proposed privacy mechanism (Algorithm 1) anonymizes an event log so that the guessing advantage that an attacker gains after the disclosing of the event log would not exceed $\delta$. Furthermore, we prove that the proposed method fulfills the requirements R1 and R2. As presented in the paper, the attacker has the following goals:

\begin{itemize}
    \item $h_1$: Has the case of an individual gone through a given prefix or suffix? The output is a bit with a value $\in \{0,1\}$.
    \item $h_2$: What is the cycle time of a particular activity that has been executed for the individual? The output is a real value that the attacker may wish to estimate with a certain precision.
\end{itemize}

\subsection{Guessing Advantage}
\label{apndx:guessing_advantage}
In this section, we prove that the adoption of the guessing advantage mechanism fulfills the requirement R2.

In this article, we apply the mechanism proposed by Laud et al.~\cite{laud2020framework,laud2019interpreting} to the release of the timestamp attribute of the event log.
In Def~\textbf{3.6}, we defined the guessing advantage as the difference between the posterior probability (after publishing $M(L)$) and the prior probability (before publishing $M(L)$) (prior knowledge) of an attacker making a successful guess in $H_p$.  
Def~\textbf{3.6} defines the prior knowledge.
Laud et al.~\cite{laud2020framework,laud2019interpreting} proposed estimation of the posterior guessing probability. 

\begin{prop}[Posterior Guessing Probability~\cite{laud2020framework,laud2019interpreting}]
\label{apndx:proposition_posterior}
	The posterior guessing probability of an attribute ranging between 0 and r for a single individual	 after the release of the timestamp attribute of an event log is bounded by	
	$P' \leq \frac{1}{1 + exp(-\epsilon \cdot r)\frac{1-P}{P}}\enspace.$	
\end{prop}

\begin{proof}
			(Taken from~\cite{laud2020framework,laud2019interpreting}) An attacker has a prior knowledge $k(l)$ of part of the event log $l$. Using the equality $Pr[X=x] = \sum_{y \in Y} Pr[X=x,Y=y]$ and Bayesian formula $Pr[A,B] = Pr[A|B]\cdot Pr[B]$, we can rewrite
	\begin{eqnarray*}\label{eq:prior}		
		P' &:=& Pr[h(L) \in H_p\ |\ M_f(L)=M_f(l), k(L) = k(l)]\\
		& = & \frac{Pr[h(L) \in H_p, M_f(L)=M_f(l), k(L) = k(l)]}{Pr[M_f(L)=M_f(l), k(L) = k(l)]}\\
		& = & \frac{\sum_{l': h(l') \in H_p, k(l) = k(l')}\ Pr[M_f(l')=M_f(l)]\cdot Pr[L = l']}{\sum_{l': k(l) = k(l')} Pr[M_f(l')=M_f(l)]\cdot Pr[L = l']}\\
		& = & \frac{1}{1 + \frac{\sum_{l':\ h(l') \notin h(l), k(l') = k(l)} Pr[M_f(l')=M_f(l)]\cdot Pr[L = l']}{\sum_{l'':\ h(l'') \in h(l), k(l'') = k(l)} Pr[M_f(l'')=M_f(l)]\cdot Pr[L = l'']}}\enspace,	
	\end{eqnarray*}
	For an $\epsilon$-DP mechanism $M_f$, since $l'$ and $l''$ differ in one item due to the condition $k(l')=k(l)=k(l'')$, we have $\frac{Pr[M_f(l')=M_f(l)]}{Pr[M_f(l'')=M_f(g)]} \geq exp(-\epsilon\cdot r)$, where $r$ is the largest possible difference between two values of an attribute that the attacker is guessing. This gives us	
	\begin{eqnarray*}
		P' &\leq& \frac{1}{1 + exp(-\epsilon\cdot r)\frac{\sum_{l':\ h(l') \notin h(l), k(l') = k(l)} Pr[L = l']}{\sum_{l'':\ h(l'') \in h(l), k(l'') = k(l)} Pr[L = l']}}\\
		&=& \frac{1}{1 + exp(-\epsilon\cdot r)\frac{Pr[h(L) \notin H_p, k(L) = k(l)]}{Pr[h(L) \in H_p, k(L) = k(l)]}}\\
		&=& \frac{1}{1 + exp(-\epsilon\cdot r)\frac{Pr[h(L) \notin H_p\ |\ k(L) = k(l)]}{Pr[h(L) \in H_p\ |\ k(L) = k(l)]}}\\
	\end{eqnarray*}
Substituting $P$ from Def~\textbf{3.6}, we get
\begin{equation}\label{eq:postpr}
	P' \leq \frac{1}{1 + exp(-\epsilon \cdot r)\frac{1-P}{P}}\enspace.
\end{equation}
\qedhere

\end{proof}

\begin{prop}\label{apndx:proposition_minimum_eps} 
The maximum possible $\epsilon$ that achieves the upper bound $\delta$, w.r.t the particular attack model, and fulfills the requirement $R2$ is
$\epsilon = \frac{-\ln\left(\frac{P}{1-P} \cdot \left(\frac{1}{\delta+P} - 1 \right) \right)}{r}$
\end{prop}
\begin{proof}
Given Def~\textbf{3.9}, $\delta$ is the maximum guessing advantage probability after disclosing the attribute. Therefore,
\begin{equation}\label{eq:guessing_adv_upper}
	P' - P \leq \delta \enspace.
\end{equation}
Substituting Eq~(\ref{eq:postpr}) into Eq~(\ref{eq:guessing_adv_upper}), we can estimate the largest possible $\epsilon$ (the minimum amount of noise) that achieves the upper bound $\delta$ as
\begin{equation}\label{eq:maineq}
\epsilon = \frac{-\ln\left(\frac{P}{1-P} \cdot \left(\frac{1}{\delta+P} - 1 \right) \right)}{r}\enspace.
\end{equation}

From Eq~(\ref{eq:maineq}), the $\epsilon$ represents the maximum possible value for $\epsilon$ achieving the bound $\delta$. Hence, we use Eq~(\ref{eq:maineq}) to get the minimum possible noise to fulfill the requirement $
R2$.\qedhere
\end{proof}

\subsection{Contingency Table (CT)}
\label{apndx:contingency_table}
In Def~\textbf{3.10}, we defined the DAFSA transition Contingency table as a histogram of counts of DAFSA transitions. First, we estimate the privacy leakage through the contingency table.
\begin{prop}\label{apndx:prop:ct:dp} 
Let $M^{CT}$ be a mechanism that, for each count cell, samples (independently) noise from the Laplace distribution $Lap(1/\epsilon)$ and adds it to the count. Then, the level of DP w.r.t. change in a prefix/suffix of some trace of the underlying event log is:
\begin{enumerate}
\item $\epsilon$ for a single count cell of the noisified CT;
\item $k\cdot\epsilon$ for the entire noisified CT, where $k$ is the longest case length in the event log.
\end{enumerate}
\end{prop}
\begin{proof}
Suppose that we update a prefix/suffix of a trace in the event log. 
Each single count cell in the contingency table may change by at most  $\pm 1$ for each update step.
Hence, the global sensitivity of a single count cell w.r.t. changing some prefix/suffix is $1$. We can sample additive noise from distribution $Lap(1/\epsilon)$ and add it to a count cell to achieve $\epsilon$-DP w.r.t. that count cell.

For the entire contingency table, a change in the frequency of a single case variant can affect at most $k$ count cells (i.e., the count over the DAFSA transitions that represent that case variant), where $k$ is the maximum case variant length in the event log. Hence, the global sensitivity is $k$, and we need to sample noise from $Lap(k/\epsilon)$ to achieve $\epsilon$-DP.\qedhere
\end{proof}


Given the desired guessing advantage threshold $\delta$, we need to prevent an attacker from achieving attack $h_1$.

\begin{prop}\label{apndx:prop:ct:ga} 
Let $M^{CT}$ be a mechanism that, for each count cell, draws (independently) noise from the Laplace distribution $Lap(1/\epsilon)$, adds or subtracts the noise from the count, and rounds negative results to 0. Let $\delta$ be the desired upper bound probability that the attacker achieves attack $h_1$. Then, it suffices to take
\[\epsilon=-\ln\left(\frac{P}{1-P} \cdot \left(\frac{1}{\delta+P} - 1 \right) \right)\enspace,\]
where $P = \frac{1-\delta}{2}$.
\end{prop}
\begin{proof}
Let $P$ be the prior probability of guessing (i.e., without observing the output). 
We adopt the mechanism proposed by Laud et al.~\cite{laud2020framework,laud2019interpreting} to estimate the guessing advantage, which is the upper bound of the difference between the posterior and the prior probability of guessing an attribute ranging between $0$ and $1$ (r=1) by
\[\delta = \frac{1}{1 + exp(-\epsilon)\frac{1-P}{P}} - P\enspace,\]
which can be reversed to
\[\epsilon = -\ln\left(\frac{P}{1-P} \cdot \left(\frac{1}{\delta+P} - 1 \right) \right)\enspace.\]

Laud et al.~\cite{laud2020framework,laud2019interpreting} estimate the prior knowledge $P$ from the distribution of input values. However, the contingency table contains counts, which means the lack of distribution. Laud et al.~\cite{laud2020framework,laud2019interpreting} elaborate that, in case of the lack of distribution, we can estimate the value of $P$ that minimizes the $\epsilon$. We take the derivative of $\epsilon$ w.r.t. $P$, which yields $P = \frac{1-\delta}{2}$ as the prior knowledge for the contingency table counts. 

We need to sample noise to achieve $\epsilon$-DP. First of all, rounding negative noisified counts up to $0$ can be considered as post-processing that does not depend on private data. 
Hence, we focus on adding the Laplace noise. While noise sampled from $Lap(1/\epsilon)$ would be enough for a single count cell, for an entire table, we would need $Lap(k/\epsilon)$ where $k$ is the longest case length in the event log, as shown in Prop~\ref{apndx:prop:ct:dp}. 

All correlated cells may provide additional information about the target cell.
For example, activity B may always follow activity A.
This additional knowledge may increase $P$. However, we have already chosen the \emph{worst-case} $P$ (in terms of guessing advantage) for estimating the noise, 
which does not depend on any background knowledge that the attacker may get, including the related outputs. We note that if we used 
another prior knowledge estimation and/or estimated the attacker's success directly instead of estimating the \emph{advantage}, then we would indeed require sampling noise from $Lap(k/\epsilon)$.\qedhere
\end{proof}

\subsection{Correctness of Sampling}
\label{apndx:sampling}

In this article, we adopt differential privacy, which injects noise into the data before its publication. 
In order to translate the calculated noise for every DAFSA transition group (in CT) into case variants anonymization, we perform case sampling, as defined in Def~\textbf{3.13} and Def~\textbf{3.14}. Following, we provide the correctness of sampling.
We assume that the event log contains the three columns: Case ID, Activity label, and Timestamps, and the activity labels are public information.
First, we discuss privacy leakage of the case variants distribution (\emph{without} taking into account timestamps).

\begin{prop}\label{apndx:prop:el:cnt:lap:ga}
Let the event log $L$ have a fixed constant timestamp value for all the entries. Let $CT$ be the contingency table that corresponds to the event log $L$. Let $M^{OV}$ be a mechanism that, for each count cell of the $CT$, draws (independently) noise $z$ from Laplace distribution $Lap(1/\epsilon)$ and:
\begin{itemize}

\item Replicates the random cases that go through the DAFSA transition, which corresponds to the count cell (prefix/suffix group), if $z >0$. The number of replications is $z$;
\item Drops random cases from the event log that go through a DAFSA transition which corresponds to the count cell, if $z < 0$. The number of deletions is $z$;
\item Shuffles all cases and updates all case IDs before publishing the resulting event log.

\end{itemize}
Let $\delta$ be the desired upper bound of the attacker succeeding in the goal $h_1$. Then, it suffices to take
\[\epsilon=-\ln\left(\frac{P}{1-P} \cdot \left(\frac{1}{\delta+P} - 1 \right) \right)\enspace,\]
where $P = \frac{1-\delta}{2}$.
\end{prop}
\begin{proof}
We show that sampling the cases has the same privacy guarantees as adding Laplace noise to the counts. Insertion and deletion of $z$ cases work similarly to adding Laplace noise with rounding negative results after noise injection up to $0$.
However, L contains strictly more information than the information represented by CT, e.g., the timestamp and the case IDs. 
An attacker can single out the individual using their timestamps (attack $h_2$). At this step, we assume having a fixed timestamp while anonymizing the case variants distribution, and we discuss the timestamp anonymization later in Appendix~\ref{appdx:time}.
Also, the attacker can guess the new injected cases through their case IDs.
For example, if the actual case IDs range from $0$ to $10$, and we start the injected new cases from case ID= $11$, the attacker will clearly see which cases are duplicates. In this article, we shuffle the cases and generate new case IDs for all the cases in the event log.
More formally, the final result is a multiset of cases, equivalent to the counts of a corresponding CT.\qedhere

\end{proof}

\subsection{Correctness of Oversampling}
In the conference version of this article, we fulfilled a stricter requirement. We fulfill the requirement that ``the anonymized log must have the same set of case variants as the original log.'' We adopted oversampling instead of sampling to fulfill such a requirement.
We use \emph{oversampling} to keep all the initial entries in the event log.

\begin{prop}\label{apndx:prop:el:cnt:onesidelap:ga}
Let the event log $L$ have a fixed timestamp value for all the entries. Let $CT$ be the contingency table that corresponds to the event log $L$. Let $M^{OV}$ be a mechanism that, for each count cell of the $CT$, samples (independently) noise $z$ from Laplace distribution $Lap(1/\epsilon)$ and:
\begin{itemize}
\item Inserts into $L$ $|z|$ additional copies of any cases that have been included in that count cell;
\item Shuffles all resulting cases and updates all case IDs before publishing the resulting event log.
\end{itemize}
Let $\delta$ be the desired upper bound of the attacker succeeding in the goal $h_1$. Then, 
\begin{itemize}
    \item it suffices to take 
    \[\epsilon=-2\ln(b / c^2 - (\delta - 1) / (c \cdot b))\enspace,\]
where $c = \sqrt[3]{6}$ and \\

$b = \sqrt[3]{\sqrt{3} \cdot \sqrt{2 \cdot \delta^3 + 21 \cdot \delta^2 - 48 \cdot \delta + 25} - 9 \cdot \delta + 9}$.
\item  a mechanism  that adds noise $|z|$, where $z \gets Lap(1/\epsilon)$, fulfills the requirement that ``the anonymized log must have the same set of case variants as the original log.''
\end{itemize}
\end{prop}
\begin{proof}
Oversampling the cases using Def~\textbf{3.12} keeps the same case variants as the input log L.
Similar to the proof of Prop~\ref{apndx:prop:el:cnt:lap:ga}, we show that a mechanism $M$ that adds noise $|z|$ for $z \gets Lap(1/\epsilon)$ to the counts of CT constructed from L ensures the differential privacy guarantees.
Let $L_1$ and $L_2$ be two neighboring event logs that differ in the presence of one trace. Without loss of generality, let $L_1$ have a smaller count than $L_2$ for the observed count cell. Suppose that the added noise instance is $|z| \geq 1$. In that case, the noisified output $y$ can be obtained from both $L_1$ and $L_2$. In particular, we have

\begin{eqnarray*}
\frac{\Pr[M(L_1) = y]}{\Pr[M(L_2) = y]} & = & \frac{exp(\epsilon \cdot (y - M(L_1)))}{exp(\epsilon \cdot (y - M(L_2)))}\\
& = & exp(\epsilon \cdot (y - M(L_1) + y - M(L_2)))\\
& \leq & exp(\epsilon)\enspace.
\end{eqnarray*}


So the guessing advantage $\delta$ for $|z| \geq 1$ can be computed as in Prop~\ref{apndx:prop:ct:ga}. However, for $|z| < 1$, the noisified value $y$ produced by $M(L_1)$ will be \emph{between} the true counts of $L_1$ and $L_2$, and $y$ could never be an output of $M(L_2)$, since we only add positive noise. 
The probability of getting $|z| \geq 1$ for Laplace noise is $CDF_{LAP}(1) - CDF_{LAP}(-1) = (1 - \frac{1}{2}exp(-\epsilon)) -  \frac{1}{2}exp(-\epsilon) = 1 - exp(-\epsilon)$. 

Let $\epsilon$ is computed to achieve guessing advantage $\delta'$ with standard Laplace distribution as in Prop~\ref{apndx:prop:el:cnt:lap:ga}. The actual guessing advantage with one-sided Laplace distribution will be
\begin{equation}\label{it:1}
\delta = exp(-\epsilon) \cdot \delta' + (1 - exp(-\epsilon))\enspace.
\end{equation}

From Prop~\ref{apndx:proposition_minimum_eps}, the maximum $\epsilon$ that achieves the upper bound $\delta$ is $\epsilon=-\ln\left(\frac{P}{1-P} \cdot \left(\frac{1}{\delta'+P} - 1 \right) \right)$ for $P = \frac{1-\delta'}{2}$.
Substituting from (\ref{eq:maineq}) into (\ref{it:1}) and using a numerical solver (e.g., Wolfram Alpha) to express $\epsilon$ through $\delta$.

 \[\epsilon=-2\ln(b / c^2 - (\delta - 1) / (c \cdot b))\enspace,\]
where $c = \sqrt[3]{6}$ and \\

$b = \sqrt[3]{\sqrt{3} \cdot \sqrt{2 \cdot \delta^3 + 21 \cdot \delta^2 - 48 \cdot \delta + 25} - 9 \cdot \delta + 9}$.


\qedhere 
\end{proof}

\subsection{Timestamp Anonymization}
\label{appdx:time}
In this appendix, we consider timestamp anonymization. We need to add enough noise to the timestamps to prevent attacker success in $h_2$, and $h_1$, since timestamps may potentially leak which traces are real and which are not (new injected cases by sampling). 
We note that correlations between timestamps are difficult to handle in terms of guessing advantage. The main problem is that the effect of correlated times would greatly depend on how they are correlated, and in some cases, even a little leakage of $ts_2$ may leak everything about $ts_1$. For example, let $ts_1 \in \{0,1\}$ and $ts_2 \in \{0,...,1023\}$ be distributed uniformly. We have the prior knowledge $P_1 = 1/2$, and $P_2 = 1/1024$. Suppose that, after seeing the noisified output, the attacker constrained his view of $ts_2$ to $\{0,...,511\}$. There are now $512$ possible choices for $ts_2$, so $\delta = 1/512 - 1/1024 = 1/1024$, which is very small. However, when $ts_1$ is the highest bit of $ts_2$, the value of $ts_1$ would be leaked. This particular correlation of times is unlikely in practice, but it demonstrates the problem in general.

To this end, we convert a timestamp $ts_k$ to a time difference $dts_k := ts_k - ts_{k-1}$ of sequential events, which is the duration of time that an individual has spent in a transition from one event to the next one. The timestamp at time $ts_0$ is the starting time $st$, which needs to be published as well to make it possible to reconstruct the actual timestamps.  
$st$ can be converted to the time difference between the start time of a case and the start time of the log.
We obtain the result for linearly correlated time differences (e.g., increasing the duration of the first event by a minute increases every other event at most by one minute). To this end, we scale $\epsilon_k$ by the length of the trace. Prop~\ref{apndx:prop:el:time:ga} proposes a mechanism that anonymizes the time differences. 

\begin{prop}\label{apndx:prop:el:time:ga}
Let $M^{T}$ be a mechanism that, for each timestamp $ts_k$, samples (independently) noise from the Laplace distribution $Lap(1/\epsilon)$ and adds the noise to $ts_k$. Let $r_k$ the maximum possible value of $ts_k$. Let $\delta$ be the desired upper bound of the attacker succeeding in the attack $h_2$ with precision $p$. Then, it suffices to take
\[\epsilon_k=-\frac{\ln\left(\frac{P_k}{1-P_k} \cdot \left(\frac{1}{\delta+P_k} - 1 \right) \right)}{m \cdot r_k}\enspace,\]
where $P_k = CDF(ts_k + p \cdot r_k) - CDF(ts_k - p \cdot r_k)$, $CDF$ is the probability density function of the distribution of times, and $m$ is the length of the longest trace in L.
\end{prop}
\begin{proof}
Similarly to Prop~\ref{apndx:prop:el:cnt:lap:ga}, we can estimate an upper bound on the difference between the posterior and the prior probability of guessing an attribute ranging between $0$ and $1$ by
\[\delta = \frac{1}{1 + exp(-\epsilon_k \cdot r_k)\frac{1-P_k}{P_k}} - P_k\enspace,\]
which can be reversed to
\[\epsilon_k = -\frac{\ln\left(\frac{P_k}{1-P_k} \cdot \left(\frac{1}{\delta+P_k} - 1 \right) \right)}{r_k}\enspace.\]
The quantity $P_k$ is defined as the probability of a value being in the interval $[ts_k - p\cdot r_k, ts_k + p\cdot r_k]$.
From Prop~\ref{apndx:proposition_minimum_eps}, the above value of $\epsilon$ is the maximum within the upper bound $\delta$.
Similarly to Prop~\ref{apndx:prop:el:cnt:lap:ga}, if $CDF$ of time distributions is unknown, it is safe to take $P_k$ that minimizes the $\epsilon_k$ (and hence maximizes the amount of noise), which is $P_k = \frac{1-\delta}{2}$ for all $k$.

That was about leakage for a single published timestamp. We could keep $\epsilon_k$ the same for an entire L if different times are not correlated. 
If the times are linearly correlated, then, by the sequential composition of DP, we need to divide $\epsilon_k$ by the length of the longest trace $m$. \qedhere 
\end{proof}

\subsection{Event Log Anonymization}

We now provide a proof of correctness of Alg.~\textbf{1}.

\begin{lem}
\label{lem:dpel}
Let $L$ and $\delta$ be an event log and a maximum guessing advantage threshold. Log $L'$ = Alg1($L$,$\delta$) fulfills the two properties in Def~\textbf{3.1} (i.e.,\ 
$L'$ is $\Phi$-personalized differentially private), and $L'$ fulfills requirements~\textbf{R1} and~\textbf{R2}.
\end{lem}
\begin{proof}
We first show that $L'$ fulfills the two properties in Def~\textbf{3.1}.

\begin{enumerate}[label={(\arabic*)}]
\item  In Prop.~\ref{apndx:prop:el:cnt:lap:ga}, we proved that it suffices to take 
\[\epsilon_d=-\ln\left(\frac{P}{1-P} \cdot \left(\frac{1}{\delta+P} - 1 \right) \right) \enspace\] 
to anonymize the case variants of an event log in order to provide differential privacy guarantees for the desired guessing advantage upper bound $\delta$. Algorithm~\textbf{1} annotates the event log using DAFSA transitions (line 3), and computes the contingency table $CT$ of the log. After that, the algorithm uses the above $\epsilon_d$ to sample cases (replication and deletion of cases) (lines 18 and 21). Given that Prop.~\ref{apndx:prop:ct:ga} and Prop.~\ref{apndx:prop:el:cnt:lap:ga} prove that replicating and deleting cases with a random sample size $z$ from Laplace distribution $Lap(1/\epsilon)$ provides differential privacy guarantees. Algorithm~\textbf{1} provides differential privacy guarantees for the case variants anonymization;
\item In Prop.~\ref{apndx:prop:el:time:ga}, we proved that it suffices to take 
\[\epsilon_k = -\frac{\ln\left(\frac{P_k}{1-P_k} \cdot \left(\frac{1}{\delta+P_k} - 1 \right) \right)}{r_k} \enspace\] 
to provide differential privacy guarantees w.r.t timestamp for the desired guessing advantage upper bound $\delta$. Given that Algorithm~\textbf{1} adopts the above $\epsilon$ to anonymize timestamp each event ($\Phi_t$)(line 27). Hence, Algorithm~\textbf{1} provides differential privacy guarantees for the event log anonymization w.r.t. timestamp.
\end{enumerate}

We now show that $L'$ fulfills the two requirements~\textbf{R1} and~\textbf{R2}.

\begin{enumerate}
\item Algorithm~\textbf{1} does not create new case variants. It only replicates or deletes existing variants (cf.\ lines 18 and 21). Hence, it fulfills requirement~\textbf{R1};
\item In Prop.~\ref{apndx:proposition_minimum_eps}, we proved that the $\epsilon_k$ as defined in Eq~(\ref{eq:maineq}) is the maximum within the upper bound $\delta$, so Algorithm~\textbf{1} fulfills the requirement~\textbf{R2}.
\end{enumerate}

\end{proof}

\section{Evaluation Supplementary Material}
\label{sec:eval}



\subsection{Dataset Selection}
\begin{table*}

	\centering
	\caption{Descriptive Statistics of Event Logs}
	\label{tab:event_logs}
	\footnotesize
	\begin{tabular}{|c|c|c|c|c|c|c|c|c|c|c|c|}
		\hline
		
		\multirow{2}{*}{event log}
		&\multirow{2}{2em}{\centering\# Traces}
		&\multirow{2}{1.5em}{\centering\# Tasks}
		&\multirow{2}{2em}{\centering\# Events}
		&\multirow{2}{3em}{\centering\# Edges}
		&\multirow{2}{3em}{\centering Case Variant}
		&\multicolumn{2}{c|}{Trace Length}
		& \multicolumn{3}{c|}{Case Duration}\\
		
		\cline{7-11}
		
		& & & & & &  Min&Max&Min &Max &Avg\\

		\hline
		
		$BPI12$~\cite{BPIC2012}	&	13087	&	23	&	262200	&	116	&	4366	&	3	&	175	&	1.85 s	&	4.51 m	&	1.23 w	\\
				\hline
		$BPI13_{i}$~\cite{BPIC13}	&	7554	&	4	&	65533	&	16	&	1511	&	1	&	123	&	inst.	&	2.11 y	&	1.73 w	\\
				\hline
		$BPI14_{i}$~\cite{BPIC14}	&	46616	&	39	&	466737	&	497	&		22632	&	1	&	178	&	14 s	&	1.07 y	&	5.07 d	\\
				\hline
		$BPI15_1$~\cite{BPIC15}	&	1199	&	398	&	52217	&	495	&	1170	&	2	&	101	&	8.56 h	&	4.07 y	&	3.15 m	\\
				\hline
		$BPI17$~\cite{BPIC2017}	&	31509	&	24	&	1202267	&	181		&	3942	&	10	&	180	&	3.35 m	&	9.4 m	&	3.13 w	\\
				\hline
		$BPI18$~\cite{BPIC18}	&	43809	&	14	&	2514266	&	499		&	28457	&	24	&	2973	&	3.74 m	&	2.77 y	&	11.03 m	\\
		        \hline
		
		$BPIC19$~\cite{BPIC2019}	&	251734	&	42	&	1595923	&	498		&	11973	&	1	&	990	&	2 ms	&	70.33 y	&	2.35 m	\\
				\hline
		$BPI20_{r}$~\cite{BPIC20}	&	7065	&	51	&	86581	&	500	&		1478	&	3	&	90	&	12.61 h	&	3.26 y	&	2.87 m	\\
				\hline
		$CCC19$~\cite{CCC2019}	&	20	&	29	&	1394	&	149	&	20	&	52	&	118	&	11 m	&	1.01 d	&	1.73 h	\\
				\hline
		$CredReq$~\cite{creditReq}	&	10035	&	8	&	150525	&	9		&	1	&	15	&	15	&	3.5 h	&	5 d	&	22 h	\\
				\hline
		$Hospital$~\cite{hospital}	&	1143	&	624	&	150291	&	903		&	981	&	1	&	1814	&	inst.	&	3.17 y	&	1.06 y	\\
				\hline
		$Sepsis$~\cite{mannhardt2016sepsis}	&	1050	&	16	&	15214	&	115		&	846	&	3	&	185	&	2.03 m	&	1 y	&	4 w	\\
				\hline
		$Traffic$~\cite{traffic}	&	150370	&	11	&	561470	&	77		&	231	&	2	&	20	&	3 d	&	12 y	&	11 m	\\
				\hline
		$Unrine.$~\cite{Gunst2020}	&	1650	&	10	&	6973	&	25		&	50	&	2	&	35	&	10.1 m	&	2.32 y	&	3.7 w	\\
				\hline

	\end{tabular}
\end{table*}

Table~\ref{tab:event_logs} shows the selected event logs and their descriptive statistics. 
The set of selected logs (cf. Table~\ref{tab:event_logs}) can be divided into two categories: structured and unstructured logs. Structured logs have a defined process, and most of the cases follow similar case variants, e.g., Unrineweginfectie, Credit Requirement, and Road Traffic Fines. On the other hand, unstructured logs have a high degree of uniqueness, which appears in more case variants of the log, e.g., Sepsis cases, CCC19, Hospital, BPI challenge 12, 13, 14, 15, 18, and 19.


\subsection{Folder Organization}
All the anonymized logs from the experiments are available in~\cite{anonymized_logs}.
The main folder is ``anonymized logs''. There are 3 subdirectories: Amun which contains the anonymized logs by the proposed approach with different settings, PRIPEL which contains the anonymized logs by PRIPEL framework, and SaCoFa which contains the anonymized logs by the SaCoFa framework.
For the event logs anonymized by Amun, the event attribute ``epsilon\_per\_event'' represents the $\epsilon$ used to anonymize the time component of this event. The case attribute ``epsilon\_per\_trace'' represents the $\epsilon$ value used to anonymize the frequency of all the instances of this case variant.

\subsection{EMD for the minimum and the maximum $\Phi$ value}

Tables~\ref{tbl:emd_min} and~\ref{tbl:emd_max}  present the empirical evaluation with the min($\Phi$) and the max($\Phi$)as the input $\epsilon$ to the state-of-the-art, respectively.

\begin{table*} [hbtp]

	\caption{Earth Movers' Distance for the output of different anonymization approaches using the minimum value of $\Phi$. A ``-'' means that the approach ran out of memory or timed out.}
	\label{tbl:emd_min}
\scriptsize

\centering
	\begin{tabular}{ p{0.9cm} p{0.3cm} p{0.6cm} c c c c cc c c c  }
		\toprule
        \multirow{2}{*}{Log}		&	\multirow{2}{*}{$\delta$} 	&	\multirow{2}{*}{min($\Phi$)}	&	 \multicolumn{4}{c}{EMD Freq} &	 \multicolumn{4}{c}{EMD Time}	\\
       
        	\cmidrule(lr){4-8} \cmidrule(l){9-12}	&		&		&	   $Amun_s$	&	  $Amun_f$	&	  $Amun_o$	&	  $PRIPEL$ & $SaCoFa$	&	  $Amun_s$	&	  $Amun_f$	&	 $Amun_o$	&	 $PRIPEL$	\\
        \hlineB{2}
        
        \multirow{3}{*}{BPIC12}	&	0.2	&	0.058	&	\textbf{331.02}	&	653.36	&	2301.62	&	-	&936.55 &	40.30	&	\textbf{8.08}	&	192.13	&	-	\\
        	&	0.3	&	0.088	&	\textbf{212.64}	&	742.39	&	1597.79	&	-	&	898.55& 20.32	&	\textbf{13.54}	&	107.96	&	-	\\
        	&	0.4	&	0.088	&	\textbf{142.37}	&	785.87	&	1275.43	&	-	& 909.64&	\textbf{12.68}	&	16.74	&	72.56	&	-	\\\hline

        \multirow{3}{*}{BPIC13}	&	0.2	&	0.07	&	1131.91	&	\textbf{1053.45}	&	7613.27	&	-	& 3589.36&	778.18	&	811.51	&	3343.92	&	-	\\
        	&	0.3	&	0.1	&	840.55	&	\textbf{258.45}	&	5417.64	&- & 3777.81	&592.36	&	307.26	&	2055.23	&	-	\\
        	&	0.4	&	0.1	&	\textbf{558.45}	&	2450.45	&	4296.73	&	-	& 3795.81&	486.86	&	\textbf{75.13}	&	1751.96	&	-	\\\hline

        \multirow{3}{*}{BPIC14}	&	0.2	&	0.034	&	429.60	&	\textbf{395.24}	&	1905.69	&	-	& 473.48&	132.53	&	130.00	&	577.32	&	-	\\
        	&	0.3	&	0.07	&	298.86	&	\textbf{281.43}	&	1333.95	&	-	& 456.5&	82.02	&	\textbf{76.37}	&	321.72	&	-	\\
        	&	0.4	&	0.1	&	208.55	&	\textbf{179.94}	&	1012.62	&	-	& 489.47&	\textbf{47.12}	&	50.61	&	178.08	&	-	\\\hline

        \multirow{3}{*}{BPIC15}	&	0.2	&	0.043	&	20.71	&	\textbf{18.74}	&	80.61	&	-	&20.765&	5.68	&	\textbf{4.18}	&	22.47	&	-	\\
        	&	0.3	&	0.07	&	14.82	&	\textbf{7.93}	&	52.42	&	-	&	10.8&3.15	&	2.28	&	11.17	&	-	\\
        	&	0.4	&	0.88	&	12.60	&	\textbf{2.79}	&	39.09	&	10.84	&10.832&	2.46	&	\textbf{1.05}	&	8.87	&	585.20	\\\hline

        \multirow{3}{*}{BPIC17}	&	0.2	&	0.058	&	\textbf{141.37}	&	1925.92	&	1159.08	&	-	&1488&	117.56	&	\textbf{75.98}	&	268.08	&	-	\\
        	&	0.3	&	0.07	&	\textbf{78.60}	&	2667.91	&	938.79	&	-	&1515.90&	\textbf{95.07}	&	131.76	&	206.76	&	-	\\
        	&	0.4	&	0.1	&	\textbf{49.93}	&	2674.76	&	938.79	&	-	&1321.6&	\textbf{79.33}	&	133.77	&		206.76&	-	\\\hline

        \multirow{3}{*}{BPIC18}	&	0.2	&	0.039	&	3775.468&	\textbf{659.01}&	17668.46
					&	-	&3931.28&	2176.17 &	\textbf{179.39}	& 3206.75
					&	-	\\
        	&	0.3	&	0.058	&	2922.80	&	\textbf{1752.72}	&	12063.81	&	-	&3600.65&	1319.69	&	\textbf{325.14}	&	2201.39	&	-	\\
        	&	0.4	&	0.054	&	\textbf{2887.08}	&	2812.65	& 9055.04		&	-	&3028.82&	885.81	&	\textbf{517.59}	&	1647.15	&	-	\\\hline

        \multirow{3}{*}{BPIC19}	&	0.2	&	0.045	&	946.99	&	\textbf{811.80}	&	4509.08	&	-	&1163.68&	\textbf{524.18}	&	608.02	&	1513.92	&	-	\\
        	&	0.3	&	0.078	&	743.96	&	\textbf{572.71}	&	3094.05	&	-	&	1141.659&\textbf{491.84}	&	498.59	&	1054.23	&	-	\\
        	&	0.4	&	0.078	&	612.48	&	\textbf{399.36}	&	2376.21	&	-	&1024.63&	500.57	&	\textbf{360.54}	&	751.67	&	-	\\\hline

        \multirow{3}{*}{BPIC20}	&	0.2	&	0.07	&	18.97	&	\textbf{18.69}	&	138.05	&	-	&84.58&	65.23	&	69.04	&	180.50	&	-	\\
        	&	0.3	&	0.14	&	14.88	&	\textbf{10.42}	&	99.76	&	-	&84.004&	44.46	&	\textbf{40.01}	&	109.90	&-	\\
        	&	0.4	&	0.233	&	10.55	&	\textbf{2.24}	&	76.61	&	-	&86.27&	41.70	&	\textbf{35.78}	&	82.67	&	-	\\ \hline

        \multirow{3}{*}{CCC19}	&	0.2	&	0.07	&	12.78	&	\textbf{3.70}	&	30.81	&	-	&-&	\textbf{0.00}	&	\textbf{0.00}	&	\textbf{0.00}	&	-	\\
        	&	0.3	&	0.14	&	4.55	&	\textbf{2.23}	&	25.77	&	-	&	-&\textbf{0.00}	&	\textbf{0.00}	&	\textbf{0.00}	&	-	\\
        	&	0.4	&	0.233	&	5.54	&	\textbf{3.21}	&	16.64	&	-	&-&	\textbf{0.00}	&	\textbf{0.00}	&	\textbf{0.00}	&	-	\\\hline

        \multirow{3}{*}{CredReq}	&	0.2	&	0.7	&	\textbf{0.00}	&	2.00	&	4.00	&	\textbf{0.00}	&\textbf{0.00}&	233.08	&	238.79	&	186.94	&	\textbf{0.00}	\\
        	&	0.3	&	0.7	&	\textbf{0.00}	&	\textbf{0.00}	&	3.00	&	\textbf{0.00}	&\textbf{0.00}&	203.94	&	182.39	&	220.17	&	\textbf{0.00}	\\
        	&	0.4	&	0.7	&	\textbf{0.00}	&	2.00	&	2.00	&	\textbf{0.00}	&\textbf{0.00}&	201.81	&	231.37	&	180.77	&	\textbf{0.00}	\\ \hline

        \multirow{3}{*}{Hospital}	&	0.2	&	0.048	&	81.40	&	\textbf{75.74}	&	406.45	&	-	&85.19&	\textbf{10.30}	&	11.33	&	55.44	&	-	\\
        	&	0.3	&	0.088	&	\textbf{57.91}	&	61.96	&	264.09	&	-	&85.25&	7.86	&	\textbf{7.42}	&	31.71	&	-	\\
        	&	0.4	&	0.14	&	\textbf{58.68}	&	63.93	&	210.40	&	-	&85.23&	6.07	&	\textbf{6.01}	&	20.70	&	-	\\ \hline

        \multirow{3}{*}{Sepsis}	&	0.2	&	0.07	&	56.84	&	\textbf{32.20}	&	427.64	&	-	&118.91&	8.97	&	\textbf{4.37}	&	61.50	&	-	\\
        	&	0.3	&	0.088	&	\textbf{28.46}	&	67.97	&	286.23	&	-	&121.42&	6.35	&	\textbf{3.60}	&	32.93	&	-	\\
        	&	0.4	&	0.14	&	\textbf{43.38}	&	101.64	&	232.50	&	-	&120.91&	\textbf{4.26}	&	6.48	&	26.61	&	-	\\ \hline

        \multirow{3}{*}{Traffic}	&	0.2	&	0.203	&	1.64	&	\textbf{0.90}	&	33.50	&	-	&132.13&	8250.42	&	\textbf{7253.29}	&	8627.86	&	-	\\
        	&	0.3	&	0.619	&	\textbf{0.61}	&	8.01	&	25.51	&	-	&77.33&	7767.32	&	\textbf{6720.80}	&	7081.06	&	-	\\
        	&	0.4	&	0.7	&	\textbf{0.00}	&	2951.50	&	21.00	&	-	&21.34&	\textbf{7182.96}	&	12788.57	&	7419.02	&	-	\\ \hline

        \multirow{3}{*}{Unrine.}	&	0.2	&	0.203	&	6.53	&	\textbf{5.53}	&	66.00	&	289.4	&111.80&	44.96	&	\textbf{39.34}	&	184.68	&	68087	\\
        	&	0.3	&	1.238	&	\textbf{0.00}	&	2.87	&	53.67	&	287	&113.20&	\textbf{28.12}	&	33.27	&	103.67	&	67949.3	\\
        	&	0.4	&	1.417	&	\textbf{1.47}	&	2.53	&	47.00	&	291.60	&113.80&	\textbf{21.89}	&	26.35	&	90.93	&	68233.90	\\ 
\bottomrule

	\end{tabular}
	
\end{table*}

\begin{table*}[hbtp]

	\caption{Earth Movers' Distance for the output of different anonymization approaches using the maximum value of $\Phi$. A ``-'' means that the approach ran out of memory or timed out.}
	\label{tbl:emd_max}
\scriptsize

\centering
	\begin{tabular}{ p{0.9cm} p{0.3cm} p{0.6cm} c c c c cc c c c  }
		\toprule
        \multirow{2}{*}{Log}		&	\multirow{2}{*}{$\delta$} 	&	\multirow{2}{*}{max($\Phi$)}	&	 \multicolumn{4}{c}{EMD Freq} &	 \multicolumn{4}{c}{EMD Time}	\\
       
        	\cmidrule(lr){4-8} \cmidrule(l){9-12}	&		&		&	   $Amun_s$	&	  $Amun_f$	&	  $Amun_o$	&	  $PRIPEL$ & $SaCoFa$	&	  $Amun_s$	&	  $Amun_f$	&	 $Amun_o$	&	 $PRIPEL$	\\
        \hlineB{2}
        
        \multirow{3}{*}{BPIC12}	&	0.2	&	3.09	&	\textbf{331.02}	&	653.36	&	2301.62	&	1006.80	&1006.62 &	40.30	&	\textbf{8.08}	&	192.13	&	19317	\\
        	&	0.3	&	3.33	&	\textbf{212.64}	&	742.39	&	1597.79	&	977.42	&	855.57& 20.32	&	\textbf{13.54}	&	107.96	&	19264.25	\\
        	&	0.4	&	9.97	&	\textbf{142.37}	&	785.87	&	1275.43	&	905.11	& 555.66&	\textbf{12.68}	&	16.74	&	72.56	&	18101.72	\\\hline

        \multirow{3}{*}{BPIC13}	&	0.2	&	3.13	&	1131.91	&	\textbf{1053.45}	&	7613.27	&	3713.82	& 3874.63&	778.18	&	811.51	&	3343.92	&	140484	\\
        	&	0.3	&	3.24	&	840.55	&	\textbf{258.45}	&	5417.64	&	3056.55 & 2650.64	&592.36	&	307.26	&	2055.23	&	112652.4	\\
        	&	0.4	&	10	&	\textbf{558.45}	&	2450.45	&	4296.73	&	2225.18	& 3024.18&	486.86	&	\textbf{75.13}	&	1751.96	&	73357.09	\\\hline

        \multirow{3}{*}{BPIC14}	&	0.2	&	2.92	&	429.60	&	\textbf{395.24}	&	1905.69	&	535.4	& 526.49&	132.53	&	130.00	&	577.32	&	7217.31	\\
        	&	0.3	&	3.21	&	298.86	&	\textbf{281.43}	&	1333.95	&	531.18	& 535.05&	82.02	&	\textbf{76.37}	&	321.72	&	7199.76	\\
        	&	0.4	&	3.6	&	208.55	&	\textbf{179.94}	&	1012.62	&	508.15	& 507.28&	\textbf{47.12}	&	50.61	&	178.08	&	7143.46	\\\hline

        \multirow{3}{*}{BPIC15}	&	0.2	&	2.65	&	20.71	&	\textbf{18.74}	&	80.61	&	20.77	&20.85&	5.68	&	\textbf{4.18}	&	22.47	&	579.44	\\
        	&	0.3	&	2.79	&	14.82	&	\textbf{7.93}	&	52.42	&	10.77	&	10.85&3.15	&	2.28	&	11.17	&	583.81	\\
        	&	0.4	&	2.9	&	12.60	&	\textbf{2.79}	&	39.09	&	10.84	&10.42&	2.46	&	\textbf{1.05}	&	8.87	&	583.8	\\\hline

        \multirow{3}{*}{BPIC17}	&	0.2	&	2.17	&	\textbf{141.37}	&	1925.92	&	1159.08	&	2776.74	&2385.81&	117.56	&	\textbf{75.98}	&	268.08	&	78550.82	\\
        	&	0.3	&	3.38	&	\textbf{78.60}	&	2667.91	&	938.79	&	2776.7	&2471.76&	\textbf{95.07}	&	131.76	&	206.76	&	102194	\\
        	&	0.4	&	3.61	&	\textbf{49.93}	&	2674.76	&	938.79	&	1946.15	&1280.19&	\textbf{79.33}	&	133.77	&		206.76&	61156.46	\\\hline

        \multirow{3}{*}{BPIC18}	&	0.2	&	2.82	&	3775.468&	\textbf{659.01}&	17668.46
					&	-	&2587.28&	2176.17 &	\textbf{179.39}	& 3206.75
					&	-	\\
        	&	0.3	&	2.99	&	2922.80	&	\textbf{1752.72}	&	12063.81	&	-	&2494.17&	1319.69	&	\textbf{325.14}	&	2201.39	&	-	\\
        	&	0.4	&	3.28	&	\textbf{2372.08}	&	2812.65	& 9055.04		&	3537.58	&2372.69&	885.81	&	\textbf{517.59}	&	1647.15	&	504500.1	\\\hline

        \multirow{3}{*}{BPIC19}	&	0.2	&	4.30	&	946.99	&	\textbf{811.80}	&	4509.08	&	-	&872.75&	\textbf{524.18}	&	608.02	&	1513.92	&	-	\\
        	&	0.3	&	4.77	&	743.96	&	\textbf{572.71}	&	3094.05	&	-	&	871.43&\textbf{491.84}	&	498.59	&	1054.23	&	-	\\
        	&	0.4	&	5.17	&	612.48	&	\textbf{399.36}	&	2376.21	&	-	&813.17&	500.57	&	\textbf{360.54}	&	751.67	&	-	\\\hline

        \multirow{3}{*}{BPIC20}	&	0.2	&	3.79	&	18.97	&	\textbf{18.69}	&	138.05	&	98.61	&142.92&	65.23	&	69.04	&	180.50	&	17499.66	\\
        	&	0.3	&	4.11	&	14.88	&	\textbf{10.42}	&	99.76	&	73.03	&128.32&	44.46	&	\textbf{40.01}	&	109.90	&11002.3	\\
        	&	0.4	&	4.37	&	10.55	&	\textbf{2.24}	&	76.61	&	52.15	&126.68&	41.70	&	\textbf{35.78}	&	82.67	&	8141.17	\\ \hline

        \multirow{3}{*}{CCC19}	&	0.2	&	1.85	&	12.78	&	\textbf{3.70}	&	30.81	&	-	&7.17&	\textbf{0.00}	&	\textbf{0.00}	&	\textbf{0.00}	&	-	\\
        	&	0.3	&	2.67	&	4.55	&	\textbf{2.23}	&	25.77	&	-	&	4.94&\textbf{0.00}	&	\textbf{0.00}	&	\textbf{0.00}	&	-	\\
        	&	0.4	&	2.99	&	5.54	&	\textbf{3.21}	&	16.64	&	-	&3.88&	\textbf{0.00}	&	\textbf{0.00}	&	\textbf{0.00}	&	-	\\\hline

        \multirow{3}{*}{CredReq}	&	0.2	&	7.8	&	\textbf{0.00}	&	2.00	&	4.00	&	\textbf{0.00}	&0&	233.08	&	238.79	&	186.94	&	\textbf{0.00}	\\
        	&	0.3	&	8.36	&	\textbf{0.00}	&	\textbf{0.00}	&	3.00	&	\textbf{0.00}	&0&	203.94	&	182.39	&	220.17	&	\textbf{0.00}	\\
        	&	0.4	&	8.8	&	\textbf{0.00}	&	2.00	&	2.00	&	\textbf{0.00}	&0&	201.81	&	231.37	&	180.77	&	\textbf{0.00}	\\ \hline

        \multirow{3}{*}{Hospital}	&	0.2	&	2.60	&	81.40	&	\textbf{75.74}	&	406.45	&	85.22	&85.24&	\textbf{10.30}	&	11.33	&	55.44	&	2469.24	\\
        	&	0.3	&	3.29	&	\textbf{57.91}	&	61.96	&	264.09	&	85.22	&83.87&	7.86	&	\textbf{7.42}	&	31.71	&	2469.24	\\
        	&	0.4	&	3.55	&	\textbf{58.68}	&	63.93	&	210.40	&	85.19	&82.71&	6.07	&	\textbf{6.01}	&	20.70	&	2381.80	\\ \hline

        \multirow{3}{*}{Sepsis}	&	0.2	&	3.02	&	56.84	&	\textbf{32.20}	&	427.64	&	119.68	&121.22&	8.97	&	\textbf{4.37}	&	61.50	&	6218.25	\\
        	&	0.3	&	3.29	&	\textbf{28.46}	&	67.97	&	286.23	&	118.71	&102.39&	6.35	&	\textbf{3.60}	&	32.93	&	6217.93	\\
        	&	0.4	&	3.55	&	\textbf{43.38}	&	101.64	&	232.50	&	118.71	&97.80&	\textbf{4.26}	&	6.48	&	26.61	&	6217.93	\\ \hline

        \multirow{3}{*}{Traffic}	&	0.2	&	4.93	&	1.64	&	\textbf{0.90}	&	33.50	&	-	&135.26&	8250.42	&	\textbf{7253.29}	&	8627.86	&	-	\\
        	&	0.3	&	5.49	&	\textbf{0.61}	&	8.01	&	25.51	&	-	&121.62&	7767.32	&	\textbf{6720.80}	&	7081.06	&	-	\\
        	&	0.4	&	5.74	&	\textbf{0.00}	&	2951.50	&	21.00	&	-	&75.87&	\textbf{7182.96}	&	12788.57	&	7419.02	&	-	\\ \hline

        \multirow{3}{*}{Unrine.}	&	0.2	&	3.77	&	6.53	&	\textbf{5.53}	&	66.00	&	113.87	&113.53&	44.96	&	\textbf{39.34}	&	184.68	&	58296	\\
        	&	0.3	&	4.31	&	\textbf{0.00}	&	2.87	&	53.67	&	49.46	&94.4&	\textbf{28.12}	&	33.27	&	103.67	&	30038.10	\\
        	&	0.4	&	4.42	&	\textbf{1.47}	&	2.53	&	47.00	&	44.33	&42.8&	\textbf{21.89}	&	26.35	&	90.93	&	27204.87	\\ 
\bottomrule

	\end{tabular}
	
\end{table*}


 \bibliographystyle{elsarticle-num} 
 \bibliography{Amun_ext}

\begin{thebibliography}{10}
\expandafter\ifx\csname url\endcsname\relax
  \def\url#1{\texttt{#1}}\fi
\expandafter\ifx\csname urlprefix\endcsname\relax\def\urlprefix{URL }\fi
\expandafter\ifx\csname href\endcsname\relax
  \def\href#1#2{#2} \def\path#1{#1}\fi

\bibitem{cohen2020towards}
A.~Cohen, K.~Nissim, Towards formalizing the gdpr’s notion of singling out,
  {PNAS} 117~(15) (2020) 8344--8352.

\bibitem{rafiei2021group}
M.~Rafiei, W.~M. van~der Aalst, Group-based privacy preservation techniques for
  process mining, Data Knowl. Eng. (2021) 101908.

\bibitem{dwork2014algorithmic}
C.~Dwork, A.~Roth, et~al., The algorithmic foundations of differential
  privacy., Found. Trends Theor. Comput. Sci. 9 (2014) 211--407.

\bibitem{lee2011much}
J.~Lee, C.~Clifton, How much is enough? choosing $\varepsilon$ for differential
  privacy, in: Proc. ISC., Springer, 2011, pp. 325--340.

\bibitem{dwork2019differential}
C.~Dwork, N.~Kohli, D.~Mulligan, Differential privacy in practice: Expose your
  epsilons!, J. Priv. Confidentiality 9~(2) (2019).

\bibitem{LeemansPW19}
S.~J.~J. Leemans, E.~Poppe, M.~T. Wynn, Directly follows-based process mining:
  Exploration {\&} a case study, in: {ICPM}, {IEEE}, 2019, pp. 25--32.

\bibitem{Chapela-CampaDM22}
D.~Chapela{-}Campa, M.~Dumas, M.~Mucientes, M.~Lama, Efficient edge filtering
  of directly-follows graphs for process mining, Inf. Sci. 610 (2022) 830--846.

\bibitem{augustoCDRMMMS19}
A.~Augusto, R.~Conforti, M.~Dumas, M.~L. Rosa, F.~M. Maggi, A.~Marrella,
  M.~Mecella, A.~Soo, Automated discovery of process models from event logs:
  Review and benchmark, {IEEE} Trans. Knowl. Data Eng. 31~(4) (2019) 686--705.

\bibitem{Carmona22}
J.~Carmona, B.~F. van Dongen, M.~Weidlich, Conformance checking: Foundations,
  milestones and challenges, in: Process Mining Handbook, Vol. 448 of Lecture
  Notes in Business Information Processing, Springer, 2022, pp. 155--190.

\bibitem{kabierski2023hiding}
M.~Kabierski, S.~A. Fahrenkrog-Petersen, M.~Weidlich, Hiding in the forest:
  Privacy-preserving process performance indicators, Inf. Syst. 112 (2023)
  102127.

\bibitem{DumasRMR18}
M.~Dumas, M.~L. Rosa, J.~Mendling, H.~A. Reijers, Fundamentals of Business
  Process Management, Second Edition, Springer, 2018.

\bibitem{elkoumy2021mine}
G.~Elkoumy, A.~Pankova, M.~Dumas, Mine me but don't single me out:
  Differentially private event logs for process mining, in: ICPM, IEEE, 2021,
  pp. 80--87.

\bibitem{dwork2006}
C.~Dwork, Differential privacy: A survey of results, in: Proc. TAMC, Springer,
  2008, pp. 1--19.

\bibitem{dwork2006calibrating}
C.~Dwork, F.~McSherry, K.~Nissim, A.~Smith, Calibrating noise to sensitivity in
  private data analysis, in: TCC, 2006, pp. 265--284.

\bibitem{hsu2014differential}
J.~Hsu, M.~Gaboardi, A.~Haeberlen, S.~Khanna, A.~Narayan, B.~C. Pierce,
  A.~Roth, Differential privacy: An economic method for choosing epsilon, in:
  {CSF}, IEEE, 2014, pp. 398--410.

\bibitem{laud2020framework}
P.~Laud, A.~Pankova, M.~Pettai, A framework of metrics for differential privacy
  from local sensitivity, PoPETs~(2) (2020) 175--208.

\bibitem{li2012sampling}
N.~Li, W.~Qardaji, D.~Su, On sampling, anonymization, and differential privacy
  or, k-anonymization meets differential privacy, in: AsiaCCS, {ACM}, 2012, pp.
  32--33.

\bibitem{elkoumy2021privacy}
G.~Elkoumy, S.~A. Fahrenkrog-Petersen, M.~F. Sani, A.~Koschmider, F.~Mannhardt,
  S.~N. von Voigt, M.~Rafiei, L.~von Waldthausen, Privacy and confidentiality
  in process mining--threats and research challenges, ACM Trans. Manage. Inf.
  Syst. 13~(1) (Oct. 2021).

\bibitem{rafiei2020tlkc}
M.~Rafiei, M.~Wagner, W.~M. van~der Aalst, {TLKC-Privacy} model for process
  mining, in: RCIS, Springer, 2020, pp. 398--416.

\bibitem{mannhardt2019privacy}
F.~Mannhardt, A.~Koschmider, N.~Baracaldo, M.~Weidlich, J.~Michael,
  Privacy-preserving process mining, BISE 61 (2019) 595--614.

\bibitem{fahrenkrog2020pripel}
S.~A. Fahrenkrog-Petersen, H.~van~der Aa, M.~Weidlich, {PRIPEL:}
  privacy-preserving event log publishing including contextual information, in:
  BPM, Springer, 2020, pp. 111--128.

\bibitem{fahrenkrog2021sacofa}
S.~A. Fahrenkrog-Petersen, M.~Kabierski, F.~R{\"o}sel, H.~van~der Aa,
  M.~Weidlich, {SaCoFa:} semantics-aware control-flow anonymization for process
  mining, in: ICPM, 2021, pp. 72--79.

\bibitem{batista2021uniformization}
E.~Batista, A.~Solanas, A uniformization-based approach to preserve
  individuals’ privacy during process mining analyses, Peer Peer Netw. Appl.
  14~(3) (2021) 1500--1519.

\bibitem{fahrenkrog2019pretsa}
S.~A. Fahrenkrog-Petersen, H.~van~der Aa, M.~Weidlich, {PRETSA:} event log
  sanitization for privacy-aware process discovery, in: {ICPM}, IEEE, 2019, pp.
  1--8.

\bibitem{kabierski2021privacy}
M.~Kabierski, S.~A. Fahrenkrog-Petersen, M.~Weidlich, Privacy-aware process
  performance indicators: Framework and release mechanisms, in: CAiSE,
  Springer, 2021, pp. 19--36.

\bibitem{rafiei2020towards}
M.~Rafiei, W.~M. van~der Aalst, Towards quantifying privacy in process mining,
  in: {ICPM} Workshops, Vol. 406, 2020, pp. 385--397.

\bibitem{elkoumy2020secure}
G.~Elkoumy, S.~A. Fahrenkrog-Petersen, M.~Dumas, P.~Laud, A.~Pankova,
  M.~Weidlich, Secure multi-party computation for inter-organizational process
  mining, in: BPMDS, 2020, pp. 166--181.

\bibitem{elkoumy2020shareprom}
G.~Elkoumy, S.~A. Fahrenkrog{-}Petersen, M.~Dumas, P.~Laud, A.~Pankova,
  M.~Weidlich, Shareprom: {A} tool for privacy-preserving inter-organizational
  process mining, in: BPM Demos, 2020, pp. 72--76.

\bibitem{rafiei2021privacy}
M.~Rafiei, W.~M. van~der Aalst, Privacy-preserving continuous event data
  publishing 427 (2021) 178--194.

\bibitem{Rafiei22Quantifying}
M.~Rafiei, G.~Elkoumy, W.~M.~P. van~der Aalst, Quantifying temporal privacy
  leakage in continuous event data publishing, in: CoopIS, Vol. 13591 of
  Lecture Notes in Computer Science, Springer, 2022, pp. 75--94.

\bibitem{reissner2017scalable}
D.~Rei{\ss}ner, R.~Conforti, M.~Dumas, M.~La~Rosa, A.~Armas-Cervantes, Scalable
  conformance checking of business processes, in: OTM to Meaningful Int. Syst.,
  Springer, 2017, pp. 607--627.

\bibitem{daciuk2000incremental}
J.~Daciuk, S.~Mihov, B.~W. Watson, R.~E. Watson, Incremental construction of
  minimal acyclic finite-state automata, Comput. Linguistics 26~(1) (2000)
  3--16.

\bibitem{jorgensen2015conservative}
Z.~Jorgensen, T.~Yu, G.~Cormode, Conservative or liberal? personalized
  differential privacy, in: {ICDE}, 2015, pp. 1023--1034.

\bibitem{barak2007privacy}
B.~Barak, K.~Chaudhuri, C.~Dwork, S.~Kale, F.~McSherry, K.~Talwar, Privacy,
  accuracy, and consistency too: a holistic solution to contingency table
  release, in: PODS, 2007, pp. 273--282.

\bibitem{kifer2011no}
D.~Kifer, A.~Machanavajjhala, No free lunch in data privacy, in: Proc. of ACM
  SIGMOD, 2011, pp. 193--204.

\bibitem{ramdas2017wasserstein}
A.~Ramdas, N.~G. Trillos, M.~Cuturi, On wasserstein two-sample testing and
  related families of nonparametric tests, Entropy 19~(2) (2017) 47.

\bibitem{LeemansABP21}
S.~J.~J. Leemans, W.~M.~P. van~der Aalst, T.~Brockhoff, A.~Polyvyanyy,
  Stochastic process mining: Earth movers' stochastic conformance, Inf. Syst.
  102 (2021) 101724.

\bibitem{laud2019interpreting}
A.~Pankova, P.~Laud, Interpreting epsilon of differential privacy in terms of
  advantage in guessing or approximating sensitive attributes, in: 2022 IEEE
  35th Computer Security Foundations Symposium (CSF), IEEE Computer Society,
  2022, pp. 96--111.

\bibitem{BPIC2012}
B.~van Dongen,
  \href{https://doi.org/10.4121/uuid:3926db30-f712-4394-aebc-75976070e91f}{{{BPI}
  Challenge 2012}} (2012).
\newline\urlprefix\url{https://doi.org/10.4121/uuid:3926db30-f712-4394-aebc-75976070e91f}

\bibitem{BPIC13}
W.~Steemanvan,
  \href{https://doi.org/10.4121/uuid:a7ce5c55-03a7-4583-b855-98b86e1a2b07}{{BPI}
  challenge 2013} (2014).
\newline\urlprefix\url{https://doi.org/10.4121/uuid:a7ce5c55-03a7-4583-b855-98b86e1a2b07}

\bibitem{BPIC14}
B.~B. van Dongen,
  \href{https://doi.org/10.4121/uuid:c3e5d162-0cfd-4bb0-bd82-af5268819c35}{{BPI}
  challenge 2014} (2014).
\newline\urlprefix\url{https://doi.org/10.4121/uuid:c3e5d162-0cfd-4bb0-bd82-af5268819c35}

\bibitem{BPIC15}
B.~B. van Dongen,
  \href{https://doi.org/10.4121/uuid:31a308ef-c844-48da-948c-305d167a0ec1}{{BPI}
  challenge 2015} (2015).
\newline\urlprefix\url{https://doi.org/10.4121/uuid:31a308ef-c844-48da-948c-305d167a0ec1}

\bibitem{BPIC2017}
B.~van Dongen,
  \href{https://doi.org/10.4121/uuid:5f3067df-f10b-45da-b98b-86ae4c7a310b}{{{BPI}
  Challenge 2017}} (2017).
\newline\urlprefix\url{https://doi.org/10.4121/uuid:5f3067df-f10b-45da-b98b-86ae4c7a310b}

\bibitem{BPIC18}
B.~van Dongen, F.~F. Borchert,
  \href{https://doi.org/10.4121/uuid:3301445f-95e8-4ff0-98a4-901f1f204972}{{{BPI}
  Challenge 2018}} (2018).
\newline\urlprefix\url{https://doi.org/10.4121/uuid:3301445f-95e8-4ff0-98a4-901f1f204972}

\bibitem{BPIC2019}
B.~van Dongen,
  \href{https://doi.org/10.4121/uuid:d06aff4b-79f0-45e6-8ec8-e19730c248f1}{{{BPI}
  Challenge 2019}} (2019).
\newline\urlprefix\url{https://doi.org/10.4121/uuid:d06aff4b-79f0-45e6-8ec8-e19730c248f1}

\bibitem{BPIC20}
B.~B. van Dongen,
  \href{https://doi.org/10.4121/uuid:52fb97d4-4588-43c9-9d04-3604d4613b51}{{BPI}
  challenge 2020} (2020).
\newline\urlprefix\url{https://doi.org/10.4121/uuid:52fb97d4-4588-43c9-9d04-3604d4613b51}

\bibitem{CCC2019}
J.~Munoz-Gama, R.~R. de~la Fuente, M.~M. Sepúlveda, R.~R. Fuentes,
  \href{https://doi.org/10.4121/uuid:c923af09-ce93-44c3-ace0-c5508cf103ad}{{Conformance
  Checking Challenge 2019 (CCC19)}} (2019).
\newline\urlprefix\url{https://doi.org/10.4121/uuid:c923af09-ce93-44c3-ace0-c5508cf103ad}

\bibitem{creditReq}
A.~Djedović,
  \href{https://doi.org/10.4121/uuid:453e8ad1-4df0-4511-a916-93f46a37a1b5}{{Credit
  Requirement Event Logs}} (2017).
\newline\urlprefix\url{https://doi.org/10.4121/uuid:453e8ad1-4df0-4511-a916-93f46a37a1b5}

\bibitem{hospital}
F.~Mannhardt,
  \href{https://doi.org/10.4121/uuid:76c46b83-c930-4798-a1c9-4be94dfeb741}{{Hospital
  Billing - Event Log}} (2017).
\newline\urlprefix\url{https://doi.org/10.4121/uuid:76c46b83-c930-4798-a1c9-4be94dfeb741}

\bibitem{mannhardt2016sepsis}
F.~Mannhardt, Sepsis cases-event log, Eindhoven University of Technology.
  Dataset (2016) 227--228.

\bibitem{traffic}
M.~M. de~Leoni, F.~Mannhardt,
  \href{https://doi.org/10.4121/uuid:270fd440-1057-4fb9-89a9-b699b47990f5}{{Road
  Traffic Fine Management Process}} (2015).
\newline\urlprefix\url{https://doi.org/10.4121/uuid:270fd440-1057-4fb9-89a9-b699b47990f5}

\bibitem{Gunst2020}
P.~Gunst, \href{https://doi.org/10.4121/12826727.v1}{{Urineweginfectie
  (UWI-casus) logboek}} (2020).
\newline\urlprefix\url{https://doi.org/10.4121/12826727.v1}

\end{thebibliography}





\end{document}